\newcommand {\bea}{\begin{eqnarray}}
	\newcommand {\eea}{\end{eqnarray}}
\newtheorem{theorem}{Theorem}[section]
\newtheorem{coro}{Corollary}[section]
\newtheorem{remark}{Remark}[section]
\newtheorem{lemma}{Lemma}[section]
\newtheorem{definition}{Definition}[section]
\newtheorem{proposition}[lemma]{Proposition}
\newcommand{\md}{\mathbf{d}}
\def \beqs{\begin{eqnarray*}}
\def \enqs{\end{eqnarray*}}
\def \beq{\begin{eqnarray}}
\def \enq{\end{eqnarray}}
\date{}
\title{Markov $\alpha$-Potential Games}
\author{Xin Guo
\thanks{Department of Industrial Engineering \& Operations Research,
       University of California Berkeley, 
       Berkeley, CA 94720, USA \textbf{Email:} \{xinguo,  xinyu\_li\}@berkeley.edu}
\and
Xinyu Li
  \footnotemark[1]
  \thanks{Corresponding authors}
  \and
 Chinmay Maheshwari
  \thanks{Department of  Electrical Engineering and Computer Sciences, University of California Berkeley, Berkeley CA 94720 USA \textbf{Email:} \{chinmay\_maheshwari, shankar\_sastry\}@berkeley.edu}
  \footnotemark[2]
  \and
  Shankar Sastry 
  \footnotemark[3]
  \and
  Manxi Wu
  \thanks{
  Department of Civil and Environmental Engineering, University of California Berkeley,  Berkeley CA 94720 USA 
  \textbf{Email:} manxiwu@berkeley.edu
  }
}
\begin{document}
\maketitle

\begin{abstract}
We propose a new framework of Markov $\mnpg$-potential games to study Markov games. We show that any Markov game with finite-state and finite-action is a Markov $\alpha$-potential game, and establish the existence of an associated $\alpha$-potential function. Any optimizer of an $\alpha$-potential function is shown to be an $\alpha$-stationary Nash equilibrium. We study two important classes of practically significant Markov games, Markov congestion games and the perturbed Markov team games, via the framework of Markov $\alpha$-potential games, with  explicit characterization of an upper bound for $\mnpg$ and its relation to game parameters. Additionally, we provide a semi-infinite linear programming based formulation to obtain an upper bound for $\alpha$ for any Markov game. Furthermore, we study two equilibrium approximation algorithms, namely the projected gradient-ascent algorithm and the sequential maximum improvement algorithm, along with their Nash regret analysis, and corroborate the results with numerical experiments.    
\end{abstract}

\section{Introduction}
{Designing non-cooperative multi-agent systems interacting within a shared dynamic environment is a central challenge  in many existing and emerging autonomy applications, including autonomous driving, smart grid management, and e-commerce. Markov game, proposed in \cite{shapley1953stochastic}, provide a mathematical framework for studying such interactions \cite{zhang2021multi}. A primary objective in these systems is for agents to reach a \textit{Nash equilibrium}, where no agent benefits from changing its strategy unilaterally. However, designing algorithms for approximating or computing Nash equilibrium are generally intractable \cite{papadimitriou2007complexity}, unless certain structure of underlying multi-agent interactions are exploited. There is a rich line of literature on equilibrium computation and approximation algorithms for Nash equilibrium in Markov zero-sum games (see  
\cite{sayin2021decentralized} 
and references therein), Markov team games (see 
\cite{baudin2021best}
and references therein), symmetric Markov games (see \cite{yongacoglu2023satisficing}), and in particular, Markov potential games (see \cite{maheshwari2022independent, zhang2021gradient, leonardos2021global, narasimha2022multi} and references therein) and its generalization to weakly acyclic games (see \cite{arslan2016decentralized, yongacoglu2021decentralized} and references therein).}

 In this paper, we propose the \emph{Markov $\alpha$-potential game} framework, where changes in an agent's long-run utility from unilateral policy deviations are captured by an ``\emph{$\alpha$-potential function}'' and a parameter $\alpha$ (Definition \ref{def: mnpg}). We establish that any finite-state, finite-action Markov game is a Markov $\alpha$-potential game for some $\alpha \geq 0$, and there exists an $\alpha$-potential function (Theorem \ref{lemma: existence of feasible sol}). Furthermore, we show that any optimizer of an $\alpha$-potential function, if it exists, is an $\alpha$-stationary Nash equilibrium (Proposition \ref{prop:approximate_eq}).

{
Markov \(\alpha\)-potential games generalize the framework of Markov potential games (MPGs).
MPGs, originally proposed in \cite{macua2018learning} and \cite{leonardos2021global}, correspond to the special case of  $\alpha=0$ and extend a rich body of literature on static potential games (or static congestion games) \cite{monderer1996potential}.  The MPG structure has enabled learning algorithms with convergence guarantees to Nash equilibrium (e.g., \cite{maheshwari2022independent, ding2022independent}). However, two main challenges remain: (1) the lack of real-world examples that can be provably shown to be MPGs, and (2) the difficulty of certifying games as MPGs and constructing potential functions, except in special cases (e.g., state-independent transitions or identical payoffs \cite{narasimha2022multi, leonardos2021global}). Our $\alpha$-potential game framework addresses both challenges: it shows that any finite-state, finite-action Markov game is a Markov \(\alpha\)-potential game and provides a semi-infinite linear programming approach to certify MPGs (Section \ref{ssec: FindClosestMPG}).
}

Our Markov $\mnpg$-potential games framework extends the static near-potential games, proposed in \cite{candogan2013near, candogan2011flows}, to Markov games. Unlike static games, where the nearest potential function always exists, the existence of an $\alpha$-potential function requires additional analysis (Theorem \ref{lemma: existence of feasible sol}). Moreover, while finding the nearest static potential function involves finite-dimensional linear programming, computing the $\mnpg$ and its potential function requires solving a semi-infinite linear programming problem, as the $\alpha$-potential function spans both state and policy spaces, the latter being uncountable.
We derive explicit upper bounds on the parameter \(\alpha\) for two classes of relevant games. First, we consider \emph{Markov congestion games  (MCGs)}, where each stage game is a congestion game (proposed in \cite{rosenthal1973class})  and the state transition depends on agents' aggregate resource utilization. {{}This is equivalent to Markov games where each stage is a  static potential game, as static congestion games and static potential games are equivalent \cite{monderer1996potential}.} This class models applications like dynamic routing, communication networks, and robotic interactions \cite{cui2022learning, sun2023distributed, sun2024imagined, kavuncu2021potential}. We show that the upper bound on \(\alpha\) for MCGs scales linearly with the state and resource set sizes, and inversely with the number of agents (Proposition \ref{prop:mcg}). Second, we consider \emph{perturbed Markov team games (PMTGs)}, which generalize Markov team games by allowing utility deviations from the team objective. We provide an upper bound for PMTGs that scales with the magnitude of these deviations (Proposition \ref{prop: PerturbedMarkovTeamGame}). 
{{} For both MCGs and PMTGs, we calculate an upper bound on \(\alpha\) by using a \textit{specific} candidate \(\alpha\)-potential function to compute an analytical upper bound on \(\alpha\). However, this upper bound can be loose. In such cases, the semi-infinite linear programming method described in Section \ref{ssec: FindClosestMPG} can be used to obtain tighter numerical estimates of \(\alpha\).
}

 We propose two algorithms to approximate stationary Nash equilibrium in Markov $\alpha$-potential games. We study the Nash-regret of both algorithms and characterize its dependence on $\mnpg$ (Theorems \ref{thm: PG regret bound} and \ref{thm: SeqMaxImprovement}). First, we analyze the \emph{projected gradient-ascent algorithm} (Algorithm \ref{alg: PolicyGradient}), originally proposed in \cite{ding2022independent} for MPGs, in the context of Markov $\alpha$-potential games by bounding the path length of policy updates using changes in the $\alpha$-potential function and $\alpha$. {{}Following our proof technique, the analysis of many existing algorithms for MPGs can be extended similarly to Markov \(\alpha\)-potential games.} 
Second, we propose a \emph{new algorithm} called the \emph{sequential maximum improvement algorithm} (Algorithm \ref{alg: SequentialPureBRMaxPlayer}) and derive its Nash-regret. {{}The main technical novelty in the analysis is to bound the maximum improvement of a ``smoothed" Q-functions with respect to change in policies (aka ``path length of policies''), which in turn is bounded by cumulative change in \(\alpha\)-potential function (Lemma \ref{lemma:technical}). For $\alpha=0$, this algorithm and its analysis are independently relevant to MPGs.} We numerically validate these algorithms on examples of MCGs and PMTGs.

\subsection{Additional Related Works}
Our work on Markov \(\alpha\)-potential games is related to the literature on weakly acyclic Markov games, proposed in \cite{arslan2016decentralized}.
Weakly acyclic Markov games extend weakly acyclic static games to Markov games, encompassing MPGs as a special case. Unlike MPGs, weakly acyclic Markov games do not require the existence of an exact potential function, instead retain  many key properties of potential games, such as the existence of pure equilibria and finite strict best-response paths. Just as MPGs, most games are not weakly acyclic, and determining whether a game is weakly acyclic remains an open problem. 
On one hand, the introduction  of a Markov \(\alpha\)-potential games allows for design and analysis of algorithms as a game diverges from a MPG.  On the other hand, if a game is weakly acyclic,  it is an $\alpha$-potential game with the value of 
\(\alpha\) not necessarily zero.
Exploring the connection between these two approaches and how they might be used together to analyze general Markov games is an interesting and open direction for future research.

Our Algorithm \ref{alg: PolicyGradient} for Markov $\alpha$-potential games is connected with a substantial body of work on learning approximate Nash equilibria (NEs) in MPGs (see \cite{maheshwari2022independent, ding2022independent, mao2021decentralized, song2021can, fox2022independent, sun2023provably, zhangglobal}). The first global convergence result for the policy gradient method in MPGs was established in \cite{leonardos2021global}. Additionally, these algorithms have been studied in both discounted infinite horizon settings \cite{ding2022independent, fox2022independent} and finite horizon episodic settings \cite{mao2021decentralized, song2021can}. Other methods, such as natural policy gradient \cite{fox2022independent, sun2023provably, zhangglobal} and best-response based methods \cite{maheshwari2022independent}, have also been explored.


Our Algorithm \ref{alg: SequentialPureBRMaxPlayer} is reminiscent of the ``Nash-CA'' algorithm developed for MPGs in \cite{song2021can}, which 
requires each player to {sequentially} compute the best response policy using an RL algorithm in each iteration; in contrast,  our algorithm  only computes a smoothed \emph{one-step} optimal deviation. One-step optimal deviation based algorithms has also been studied for MPGs \cite{maheshwari2022independent, cartea2022algorithms}. {{} Additionally, incorporating  smoothness for better performance in Markov games is also studied in \cite{cen2021fast, evans2024learning, mertikopoulos2016learning}. }

Finally, a recent work \cite{cui2022learning} introduces an approximation algorithm for MCGs and investigates the Nash-regret. Their  results and approach  are tailored exclusively for congestion games, whereas our work focuses on a  broader framework of Markov \(\mnpg\)-potential games.

\subsection{Notations} 
For any \(n\in \mathbb{N}\),  \([n]:=\{1,2,3,...,n\}\).
For a finite set $X$, $\mathcal{P}(X)$ denotes the set of probability distributions over $X.$ For any function $f: X \ra \mathbb{R}$,  the $L_\infty$-norm is defined by $\|f\|_{\infty} = \max_{x\in X}|f(x)|$, the $L_1$-norm is $\|f\|_{1}=\sum_{x\in X}|f(x)|$,  and the $L_2$-norm is $\|f\|=\sqrt{\sum_{x\in X}|f(x)|^2}.$

\section{Markov Games}\label{sec: Setup}
Consider an $N$-player Markov game $\mathcal{G}$ as characterized by the tuple $\langle \playerSet, \stateSet, (\actSet_i)_{i\in\playerSet}, (\stagePayoff_i)_{i\in\playerSet}, \transition, \discount\rangle$, where 
$\playerSet = (1,2, \cdots N)$ is the finite set of $N\in \mathbb{N}$ players, 
$S$ is the finite set of states, \(A_i\) is the finite set of actions of player \(i\in \playerSet\) and \(A := \times_{i\in \playerSet}A_i\) is the set of joint actions of all players,  \(\stagePayoff_i: S \times A \to \mathbb{R}\) is the one-stage payoff function of player \(i\in \playerSet\), $\transition = (\transition(s'|s, a))_{s, s' \in S, a \in A}$ is the probability transition kernel such that $\transition(s'|s, a)$ is the probability of transitioning to state $s'\in S$ given the current state $s\in S$ and action profile $a\in A$, and
$\discount \in [0,1)$ is the discount factor. 

The game proceeds in discrete time steps.  At each time step $k=0,1,2,\cdots$, the state of the game is $s^k\in S,$ the action taken by player \(i\in \playerSet\) is $a_i^k \in A_i$, and the joint action of all players is $a^k = (a_i^k)_{i\in \playerSet} \in A$. Once players select their  actions, each player $i \in \playerSet$ observes her one-stage payoff  $\stagePayoff_i(s^k, a^k) \in  \mathbb{R}$, and the system transits to state $s^{k+1}$, where  $s^{k+1} \sim P(\cdot | s^k, a^k)$.
In this study, we assume that the action taken by any player is based on a randomized stationary Markov policy, as in the Markov games literature \cite{daskalakis2020independent, leonardos2021global, ding2022independent}. That is,  for any player \(i\in \playerSet\), the action selected at time step \(k\) is \(a_i^k\sim\pi_i(\cdot|s^k)\), and 
the joint policy of all players is \(\policy = (\policy_i)_{i\in \playerSet} \in \Pi:= \times_{i \in \playerSet} \Pi_i\),
with  $\Pi_i \coloneqq\{\pi_i: S\rightarrow \mathcal{P}(A_i) \}$. The joint policy of all players except player \(i\) is denoted as \(\policy_{-i} = (\policy_j)_{j \in \playerSet \setminus \{i\}} \in \Pi_{-i} \coloneqq \times_{j \in \playerSet \setminus \{i\}} \Pi_j\). Given $\pi \in \Pi$, the probability of the system transiting  from $s$ to $s'$ is denoted as \(
    P^{\pi}(s'|s) \coloneqq 
    \mathbb{E}_{a\sim \pi} [P(s'|s,a)]
    \).
The accumulated reward (a.k.a. the \emph{utility function}) for player $i$, given the initial state  $s \in S$ and the joint policy $\pi\in \Pi$, is 
\begin{equation} \label{eqn: value_function}
    V_i({s}, \pi):=\mathbb{E}_{\pi}\left[\sum_{k=0}^{\infty} \discount^k \stagePayoff_i\left(s^k, a^k\right) \mid {s}^0={s}\right],
\end{equation}
where {$\discount \in [0,1)$} is the discount factor, $a^k \sim \pi\left(s^k\right)$, and $s^{k+1} \sim P\left(\cdot | s^k, a^k\right)$.
 Denote also \(V_i(\mu, \pi) \coloneqq \mathbb{E}_{s \sim \mu} [V_i (s, \pi)]\), if the initial state follows a distribution $\mu \in \mathcal{P}(S)$.
Additionally, define the discounted state visitation distribution as 
    \(d_\mu^{\pi}(s):=(1-\discount)\sum_{t=0}^{\infty}\discount^kP(s^k = s|s^0\sim\mu)\).
To analyze this game, we adopt the solution concept of  $\epsilon$-stationary Nash equilibrium (NE).
\begin{definition}\label{def: eps NE} \textit{($\epsilon$-stationary Nash equilibrium).}
    For any $\epsilon \geq 0$, a policy profile \(\policyNash  = (\pi_i^*, \pi_{-i}^*)\) is an \emph{$\epsilon$-stationary Nash equilibrium} of the Markov game $\game$ if for any \(i\in \playerSet\), any \(\policy_i\in\Pi_i\), and any \(\initDist\in\mathcal{P}(\stateSet)\),
    \(
        \vFunc_i(\mu, \policyNash_i,\policyNash_{-i}) \geq \vFunc_i(\mu, \policy_i,\policyNash_{-i}) -\epsilon.
    \)
    \end{definition} 

When $\epsilon=0$, it is simply called a \emph{stationary NE}, which always exists in our setup \cite{fudenberg1998theory}.

\section{Markov \texorpdfstring{$\mnpg$}{a}-potential games}\label{sec: NearMPG}
In this section, we introduce the framework of Markov $\mnpg$-potential games. We show that any Markov game can be analyzed under this framework.
First, we introduce some preliminaries. We define a metric \(\md\) on \(\Pi\) as follows: for any \(\pi, \tilde\pi \in \Pi\),
\begin{align}\label{eqn: def_of_metric_d}
        \md_i\left(\pi_i, \tilde{\pi}_i\right) \coloneqq &\max_{s \in S, a_i \in A_i } \left|\pi_i\left(a_i \mid s\right)-\tilde{\pi}_i\left(a_i \mid s\right)\right|, \quad \forall i\in \playerSet, \notag \\
        \md ({\pi}, \tilde{{\pi}})\coloneqq &\max _{i \in \playerSet} \md_i\left(\pi_i, \tilde{\pi}_i\right).
\end{align}
Evidently, the sets of policies $\{\Pi_i\}_{i\in \playerSet}$ are compact in the topology induced by the metrics $\left\{\md_i\right\}_{i \in \playerSet}$,  $\Pi$ is compact in the topology induced by $\md$, and  the utility functions are continuous with respect to $\pi$ under the metric $\md$ \cite{yongacoglu2023satisficing}. 
Next, we introduce the notion of maximum pairwise distance between a Markov game and a real-valued function defined on \(S\times \Pi\). 
\begin{definition}\textit{(Maximum pairwise distance).} \label{def: MaxPairDist}
    Given any Markov game $\mathcal{G}$ and a  function $\Psi: S\times\Pi\ra\R,$  the \emph{maximum pairwise distance} $\widehat{\md}$ between $\Psi$ and $\mathcal{G}$ is defined as 
    $$
    \begin{aligned}
         \widehat{\md}(\Psi,\mathcal{G}) \coloneqq
     &\sup_{\substack{
                  s\in S, i\in \playerSet, \\
                \pi_i,\pi'_i \in \Pi_i, 
                \pi_{-i} \in \Pi_{-i}}
            }
            \Big| \Psi\left(s, \pi_i^{\prime}, \pi_{-i}\right)-\Psi\left(s, \pi_i, \pi_{-i}\right) - \left(V_i\left(s, \pi_i^{\prime}, \pi_{-i}\right)-V_i\left(s, \pi_i, \pi_{-i}\right)\right) \Big|.
    \end{aligned}
    $$
\end{definition}

Definition \ref{def: MaxPairDist} generalizes the concept of maximum pairwise distance from \cite[Definition 2.3]{candogan2013near}, extending it from static games (action profiles) to Markov games, where the distance is measured over policies that map states to action distributions.
Next, we introduce the notion of a game elasticity parameter, which is useful for defining Markov \(\alpha\)-potential games. Intuitively, this parameter captures the smallest value of the maximum pairwise distance between any function in a set \(\mathcal{F}_{\mathcal{G}}\) (to be defined shortly) and \(\mathcal{G}\).

\begin{definition}\label{def: alpha_param}\textit{(Game elasticity parameter).}
Given any game $\mathcal{G}$, its \emph{game elasticity parameter} $\alpha$ is defined as
\begin{align}\label{eqn: def eps MPG}
\alpha := \inf_{\Psi \in \mathcal{F^\game}} 
            \widehat{\md}(\Psi,\mathcal{G}),
    \end{align}    
    where \(\mathcal{F}^\game := \{\Psi:S\times \Pi \rightarrow \mathbb{R} \ \text{s.t.} \ \| \Psi\|_{\infty} \leq \frac{2N}{1-\gamma}\max_{i\in\playerSet}\| u_i\|_{\infty}\}\) 
  is a class of bounded uniformly equi-continuous function on \(\Pi\). \footnote{A set $\mathcal{F}$ of functions $f: S\times \Pi \ra \R$
    is called \emph{uniformly equi-continuous on $\Pi$}, if there exists \(\delta_{\mathcal{F}}: \mathbb{R}_+\rightarrow\mathbb{R}_+\) such that for every \(\epsilon>0\), \(|f(s, \pi)-f(s, \pi')| \leq \epsilon\) for all \(f\in \mathcal{F},s\in S\), \( \pi,\pi'\in \Pi\) such that \({\md}(\pi,\pi')\leq \delta_{\mathcal{F}}(\epsilon)\).}
\end{definition}
Our choice of the specific value of the upper bound on functions in \(\mathcal{F}^{\mathcal{G}}\) is 
useful for the proof of Proposition 4.1.

Clearly $\alpha<\infty$ as one can take $\Psi=0$ in \eqref{eqn: def eps MPG} to ensure  
$\alpha \leq 2\|V_i \|_{\infty} {<} \infty$.

{{} Furthermore, the game elasticity parameter depends on variety of game parameters, including the number of players, the action and state sets, the utility function values, the Markov state transition dynamics, and the discount factor.}

Next, we define  Markov \(\alpha\)-potential games.
\begin{definition}\label{def: mnpg}\textit{(Markov $\alpha$-potential game).}
A Markov game \(\game\) is a \emph{Markov $\alpha$-potential game} if \(\alpha\) is the game elasticity parameter. Furthermore, any \(\Phi\in\mathcal{F}^\game\) such that 
$\widehat{\md}(\Phi,\game) = \alpha$   is called an \emph{$\alpha$-potential function} of $\mathcal{G}$.
\end{definition}

Next, we present a useful property due to Definition \ref{def: mnpg}.
{
\begin{coro}\label{coro:V_minus_V_minus_Phi_minus_Phi} Let $\game$ be a Markov $\alpha$-potential game with $\alpha$-potential function $\Phi$. Then, for any \(s\in S, i\in \playerSet, \pi_i, \pi'_i \in \Pi_i, \pi_{-i}\in\Pi_{-i}\),
\begin{equation}\label{eqn: V_minus_V_minus_Phi_minus_Phi}
    \begin{aligned}
        |&V_i(s,\pi_i,\pi_{-i}) -V_i(s,\pi'_i,\pi_{-i}) - (\Phi(s,\pi_i,\pi_{-i})- \Phi(s,\pi'_i,\pi_{-i}))| \leq \alpha.
    \end{aligned}
\end{equation}
\end{coro}
}

Next, we show the existence of an $\alpha$-potential function.
\begin{theorem}\label{lemma: existence of feasible sol}\textit{(Existence of $\alpha$-potential function).}
For any Markov game $\mathcal{G}$, 
there exists $\Phi \in \mathcal{F}^\game$ such that 
\(
\widehat{\md}(\Phi,\mathcal{G})
             =
             \inf_{\Psi \in \mathcal{F}^\game} \widehat{\md} (\Psi, \mathcal{G}).\)
\end{theorem}
\begin{proof}
Define a mapping $ \mathcal{F}^\game \times \Pi \times \Pi \ni (\Psi,\pi,\pi') \mapsto  h(\Psi,\pi,\pi') :=  \underset{{s\in S, i\in \playerSet}}{\max} 
             \big| \Psi\left(s, \pi_i^{\prime}, \pi_{-i}\right)-\Psi\left(s, \pi_i, \pi_{-i}\right) - \left(V_i\left(s, \pi_i^{\prime}, \pi_{-i}\right)-V_i\left(s, \pi_i, \pi_{-i}\right)\right) \big| \in  \R$. Note that such $h$ is continuous under the standard topology induced by sup-norm on $\mathcal{F}^\game \times \Pi \times \Pi.$ 
By Berge's maximum theorem, \(g(\Psi)\coloneqq {\max}_{{\pi,\pi'\in \Pi}} h(\Psi, \pi,\pi')\) is continuous with respect to $\Psi$.
Since \(\mathcal{F}^\game \) is uniformly bounded and uniformly equi-continuous, the Arzelà–Ascoli theorem implies that $\mathcal{F}^\game$ is relatively compact in $\mathcal{C}^\Pi$, where $\mathcal{C}^\Pi \coloneqq  \{f: S\times \Pi\rightarrow \mathbb{R} ~| ~\forall  s\in S,    f(s,\cdot) \text{ is a continuous function}\}$  \cite{rudin1953principles}.
Finally, by the extreme-value theorem \cite{rudin1953principles},
there exists a function $\Phi \in \mathcal{F}^\game$ such that \(
\widehat{\md}(\Phi,\mathcal{G})
             =
             \inf_{\Psi \in \mathcal{F}^\game} \widehat{\md} (\Psi, \mathcal{G}).\)
\end{proof}

{{}Corollary \ref{coro:V_minus_V_minus_Phi_minus_Phi} and Theorem \ref{lemma: existence of feasible sol} jointly show that for any Markov game \(\game,\) an $\alpha$-potential function exists such that the gap between the change in the utility function of any agent due to a unilateral change in its policy and the change in \(\alpha\)-potential function is at most \(\alpha\). 
Next, we show that any optimizer of $\alpha$-potential function with respect to policy $\pi$ yields an $\alpha$-Nash equilibrium (NE) of game $\mathcal{G}$.}
\begin{proposition}\label{prop:approximate_eq} 
Given a Markov \(\mnpg\)-potential game $\game$ with an {$\mnpg$-potential function} \(\Phi\), for any $\epsilon > 0$, if there exists a \(\pi^\ast\in \Pi\) such that for every 
$s\in S$,
$
\Phi(s, \pi^*) + \epsilon \geq \sup_{\pi \in \Pi} \Phi(s, \pi),  
$ then
 \(\pi^\ast\in \Pi\) is an $(\mnpg + \epsilon)$-stationary NE of \(\game\). 
\end{proposition}

\begin{remark}\label{rmk:NE_optimizer}
Note that Proposition \ref{prop:approximate_eq}  holds  for any function $\Psi{\in \mathcal{F}^\game}$ that yields an upper bound for $\alpha$. That is, given a Markov $\alpha$-potential game $\game$ and a function $\Psi$ satisfying 
\begin{equation*}
    \begin{aligned}
        |&V_i(s,\pi_i,\pi_{-i}) -V_i(s,\pi'_i,\pi_{-i}) - (\Psi(s,\pi_i,\pi_{-i})- \Psi(s,\pi'_i,\pi_{-i}))| \leq \bar\alpha, \quad \forall s\in S, \pi_{i},\pi_{i}'\in \Pi_i, \pi_{-i}\in \Pi_{-i},
    \end{aligned}
\end{equation*}
for some $\bar\alpha \in [\alpha,\infty)$,
then for any $\pi^* \in \Pi$ such that for every \(s\in S\),  \(
\Psi(s,\pi^*) + \epsilon \geq \sup_{\pi\in\Pi} \Psi(s,\pi),
\)
\(\pi^*\) is an \((\bar\alpha + \epsilon)\)-stationary NE of \(\game\).
\end{remark}

\section{Examples of Markov \texorpdfstring{$\mnpg$}{a}-potential game}\label{ssec: ExamplesNearMPG}
In this section, we present three important classes of games, Markov potential games, Markov congestion games, and perturbed Markov team games, which can be analyzed through the framework of Markov $\mnpg$-potential games.

\subsection{Markov potential game}
A game is a Markov potential game if there exists {an auxiliary function  (a.k.a. potential function)} such that when a player unilaterally deviates from her policy,  the change of the potential function is equal to the change of her utility function.  
    \begin{definition}[Markov potential games \cite{leonardos2021global}]\label{def: PotentialGame} A Markov game \(\game\) is a \emph{Markov potential game} (MPG) if there exists a potential function 
    \(\potential: S \times \Pi\ra \mathbb{R}\) 
    such that for any \(i \in \playerSet\),  $s\in S$,  \(\policy_i,\policy_i'\in \Pi_i\), and  \(\policy_{-i}\in \Pi_{-i}\),
\(
        \Phi(s, \policy_i',\policy_{-i}) -  \Phi(s, \policy_i,\policy_{-i})= \vFunc_i(s, \policy_i',\policy_{-i}) -  \vFunc_i(s, \policy_i,\policy_{-i}).\)
    \end{definition}

\begin{proposition}\label{prop: exampleMPG}
    An MPG is a Markov $\alpha$-potential game with $\alpha=0.$
\end{proposition}

\subsection{Markov congestion game}
The Markov congestion game (MCG) $\game_{\textsf{mcg}}$ is a dynamic counterpart to the static congestion game introduced by \cite{monderer1996potential}, involving a finite number of players using a finite set of resources. Each stage of $\game_{\textsf{mcg}}$ is a static congestion game with a state-dependent reward function for each resource, and the state transition depends on the aggregated usage of each resource by the players. 
Specifically, let the finite set of resources in the one-stage congestion game be denoted as $E$. The action $a_i \in A_i \subseteq 2^{E}$ of each player $i \in \playerSet$ represents the set of resources chosen by player $i$. Here, the action set $A_i$ is the set of all resource combinations that are feasible for player $i$. The total usage demand of all players is 1, and each player's demand is assumed to be $1/N$.

Given an action profile \(a = (a_i)_{i \in \playerSet}\), the aggregated usage demand of each resource \(e \in E\) is given by
\begin{align}\label{eq:w}
w_e(a) =  \frac{1}{N}\sum_{i\in \playerSet}\mathbbm{1}(e \in a_i) .
\end{align}
In each state \(s\), the reward for using resource \(e\) is denoted as $(1/N)\cdot c_e(s, w_e(a))$. Thus, the one-stage payoff for player \(i \in \playerSet\) in state \(s \in S\), given the joint action profile \(a \in A\), is
 \(u_i(s,a) = (1/N)\cdot\sum_{e\in a_i}c_e(s, w_e(a))\).
The state transition probability, denoted as \(P(s'|s,w)\), depends on the aggregate usage vector \(w = (w_e)_{e \in E}\), which is induced by the players' action profile as in \eqref{eq:w}. The set of all feasible aggregate usage demands is denoted by \(W\).

The next proposition shows that, under a regularity condition on the state transition probability,   \(\game_{\textsf{mcg}}\) is a Markov \(\mnpg\)-potential game such that the upper bound of $\mnpg$  scales linearly with respect to the Lipschitz constant $\zeta$, the size of state space $|S|$, resource set $|E|$, and decreases as $N$ increases. 
\begin{proposition}\label{prop:mcg}
If there exists some $\zeta>0$ such that for any $s, s'\in S, w, w'\in W,$ 
\(
    |P(s'|s,w)-P(s'|s,w')| \leq \zeta  \|w-w'\|_{1}, \)
then the
    congestion game \(\game_\textsf{mcg}\) is a Markov \(\mnpg\)-potential game with 
   {\(
        \mnpg \leq {2\zeta\discount |S| |E|\sup_{s,\pi}\Psi(s,\pi)}/({N(1-\discount)})
    \)},
where\begin{align}\label{eq:potential}
\Psi(\mu, \pi):=\frac{1}{N}\avg_{\mu,\pi}\Bigg[\sum_{k=0}^{\infty}\discount^k \bigg(\sum_{e\in E}\sum_{j=1}^{w_e^k N}c_e\Big(s^k, \frac{j}{N}\Big)\bigg)\Bigg],
\end{align}
such that \(s^0\sim \initDist\), the aggregate usage vector $w^k = (w_e^k)_{e \in E}$ is induced by \(a^k\sim \policy(s^k)\), and \(s^k\sim \transition(\cdot|s^{k-1},w^{k-1})\).
\end{proposition}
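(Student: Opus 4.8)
The plan is to lift the classical Rosenthal potential identity from each stage game to the discounted Markov setting and then control the error that arises solely because the transition kernel depends on the action profile. First I would introduce the \emph{stage potential}
\[
\phi(s,a)=\sum_{e\in E}\sum_{j=1}^{w_e(a)|I|/D} c_e\!\left(s,\tfrac{jD}{|I|}\right),
\]
so that the function $\Phi$ in \eqref{eq:potential} is exactly the $\delta$-discounted value of the one-stage reward $\phi$ under the policy $\pi$. The first step is the static, state-wise Rosenthal computation: for any fixed $s$, any $a_{-i}$, and any unilateral change $a_i\to a_i'$, only the inner sums for resources in the symmetric difference $a_i\triangle a_i'$ change, and each such change telescopes to the load-dependent cost $c_e$ evaluated at the new or old aggregate usage from \eqref{eq:w}, giving $\phi(s,a_i',a_{-i})-\phi(s,a_i,a_{-i})=u_i(s,a_i',a_{-i})-u_i(s,a_i,a_{-i})$. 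Averaging this identity over $a_{-i}\sim\pi_{-i}(s)$ and over the two marginals $\pi_i,\pi_i'$ yields the crucial one-stage identity $g^{\pi_i',\pi_{-i}}(s)-g^{\pi_i,\pi_{-i}}(s)=r_i^{\pi_i',\pi_{-i}}(s)-r_i^{\pi_i,\pi_{-i}}(s)$, where $g^{\pi}(s)=\mathbb{E}_{a\sim\pi(s)}[\phi(s,a)]$ and $r_i^{\pi}(s)=\mathbb{E}_{a\sim\pi(s)}[u_i(s,a)]$.

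Next I would write both $\Phi$ and $V_i$ through the performance-difference (telescoping) decomposition along the trajectory generated by $\pi'=(\pi_i',\pi_{-i})$. Using the Bellman identities for the state-value functions $\Phi(\cdot,\pi)$ and $V_i(\cdot,\pi)$, I obtain
\[
\Phi(\mu,\pi')-\Phi(\mu,\pi)=\tfrac{1}{1-\delta}\,\mathbb{E}_{s\sim d_\mu^{\pi'}}\Big[\big(g^{\pi'}(s)-g^{\pi}(s)\big)+\delta\textstyle\sum_{s'}\big(P^{\pi'}(s'|s)-P^{\pi}(s'|s)\big)\Phi(s',\pi)\Big],
\]
together with the analogous identity for $V_i$ (replace $g$ by $r_i$ and $\Phi(s',\pi)$ by $V_i(s',\pi)$). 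Subtracting the two and invoking the one-stage identity from the first step, the reward-difference terms cancel exactly, leaving
\[
\big(\Phi(\mu,\pi')-\Phi(\mu,\pi)\big)-\big(V_i(\mu,\pi')-V_i(\mu,\pi)\big)=\tfrac{\delta}{1-\delta}\,\mathbb{E}_{s\sim d_\mu^{\pi'}}\Big[\textstyle\sum_{s'}\big(P^{\pi'}(s'|s)-P^{\pi}(s'|s)\big)\,(\Phi-V_i)(s',\pi)\Big].
\]
This isolates the entire $\mnpg$-gap into the mismatch of the action-dependent transition kernels, precisely the term that would vanish if transitions were action-independent (recovering an exact MPG with $\mnpg=0$).

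Finally I would bound this residual. Since $\pi'$ and $\pi$ differ only in player $i$, I couple on $a_{-i}$ and write the inner kernel difference as $\sum_{a_i}(\pi_i'(s,a_i)-\pi_i(s,a_i))\,P(s'|s,w(a_i,a_{-i}))$; subtracting a reference value (legitimate because the coefficients sum to zero) and applying the Lipschitz hypothesis with $\|w(a_i,a_{-i})-w(a_i^0,a_{-i})\|_1\le |E|\,D/|I|$ gives $|P^{\pi'}(s'|s)-P^{\pi}(s'|s)|\le 2\zeta |E| D/|I|$, where the factor $2$ is $\|\pi_i'-\pi_i\|_1\le 2$. Summing over the $|S|$ successor states and bounding the weight by $\max_{s',\pi}|(\Phi-V_i)(s',\pi)|\le \max_{s,\pi}\Phi(s,\pi)$ (which uses $0\le V_i\le\Phi$ pointwise, from nonnegativity of the costs and $u_i\le\phi$) produces exactly $\mnpg=\tfrac{2\zeta|S|D\delta|E|\max_{s,\pi}\Phi(s,\pi)}{|I|(1-\delta)}$; since the bound is uniform in the initial state it holds for each $s\in S$ as required by \eqref{eqn: def eps MPG}. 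I expect the main obstacle to be the second step: getting the telescoping decomposition right so that the reward terms cancel cleanly and the residual collapses to a single transition-mismatch term weighted by $\Phi-V_i$. Once that structural cancellation is in hand, the Rosenthal identity and the Lipschitz estimate are routine.
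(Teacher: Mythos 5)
Your proof is correct and reaches the stated constant, but the middle step is executed differently from the paper. Both arguments share the same skeleton: the stage-wise Rosenthal identity $\varphi(s,a_i',a_{-i})-\varphi(s,a_i,a_{-i})=u_i(s,a_i',a_{-i})-u_i(s,a_i,a_{-i})$ cancels the reward terms, the entire gap is attributed to the action-dependence of the transition kernel, the kernel mismatch is bounded by $2\zeta|S||E|D/|I|$ via the Lipschitz hypothesis together with $\|w(a_i,a_{-i})-w(a_i',a_{-i})\|_1\le |E|D/|I|$ (the paper's Lemma~\ref{lem: CongestionGameTransitionDifference}), and the weight is controlled by $|\Phi(s',\pi)-V_i(s',\pi)|\le\max_{s,\pi}\Phi(s,\pi)$ using $0\le u_i\le\varphi$. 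Where you diverge is in how the residual is isolated: the paper writes the Bellman recursions for $\Phi(\cdot,\pi)-\Phi(\cdot,\pi')$ and $V_i(\cdot,\pi)-V_i(\cdot,\pi')$, subtracts them, and obtains a \emph{recursive inequality} of the form $X\le \delta X+\delta\,\|\Phi-V_i\|_\infty\,\|P^\pi-P^{\pi'}\|_\infty$ for $X=\max_s|(\Phi-\Phi')-(V_i-V_i')|$, which it then rearranges to extract the factor $\delta/(1-\delta)$. You instead invoke the performance-difference telescoping over the occupancy measure $d_\mu^{\pi'}$, which yields an \emph{exact closed-form identity} for the residual as $\tfrac{\delta}{1-\delta}\,\mathbb{E}_{s\sim d_\mu^{\pi'}}\bigl[\sum_{s'}(P^{\pi'}(s'|s)-P^{\pi}(s'|s))(\Phi-V_i)(s',\pi)\bigr]$ before any inequality is taken. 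Your route is arguably cleaner in that it makes transparent exactly which object is being bounded and why the gap vanishes for action-independent transitions, at the cost of having to justify the (standard but slightly less elementary) performance-difference lemma for value functions with policy-dependent rewards; the paper's sup-norm contraction argument is more self-contained. Both yield the same $\mnpg$.
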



\subsection{Perturbed Markov team game}
A Markov game is called a perturbed Markov team game (PMTG) \(\pertCommonGame\) if the payoff function for each player \(i \in \playerSet\) can be decomposed as \(u_i(s, a) = r(s, a) + \xi_i(s, a)\). Here, \(r(s, a)\) represents the common interest of the team, and \(\xi_i(s, a)\) represents player \(i\)'s heterogeneous preference, such that \(\|\xi_i\|_{L_\infty} \leq \kappa\), where \(\kappa \geq 0\) measures each player's deviation from the team's common interest. As \(\kappa \to 0\), \(\game_{\textsf{pmtg}}\) becomes a Markov team game, which is an MPG \cite{leonardos2021global}.

The next proposition shows that a \(\game_{\textsf{pmtg}}\) is a Markov \(\mnpg\)-potential game, and the upper bound of \(\mnpg\) decreases as the magnitude of the payoff perturbation \(\kappa\) decreases.

\begin{proposition}\label{prop: PerturbedMarkovTeamGame}
    A perturbed Markov team game \(\pertCommonGame\) is a Markov \(\mnpg\)-potential game with 
{\(
        \mnpg \leq \frac{2\kappa}{(1-\discount)^2}.
    \)}
\end{proposition}

\section{Finding an upper bound of \texorpdfstring{$\mnpg$}{a}}
\label{ssec: FindClosestMPG}
{{} The analysis of MCG and PMTG in Section \ref{ssec: ExamplesNearMPG} utilizes a specific form of the Markov \(\alpha\)-potential function to obtain an upper bound on \(\alpha\). In this section, we provide an optimization-based procedure to find an upper bound on \(\alpha\) by also computing the \(\alpha\)-potential function.}

Our approach is based on changing the feasible set of the optimization problem in \eqref{eqn: def eps MPG} to \(\tilde{\mathcal{F}}^\game\), defined as follows:
\begin{align}\label{eqn: F tilde}
    \Tilde{\mathcal{F}}^\game := 
    &\Bigg\{ \Psi(s,\pi)
    = \sum_{{s'\in S, a'\in A}}d^{s}(s',a';\pi) \phi(s',a'),  \forall s\in S,
\pi \in \Pi \bigg|
    \phi: S\times A \rightarrow\R \\
    &\qquad \text{ s.t. } 
    \| \phi\|_{\infty} \leq N \max_{i\in\playerSet} \| u_i\|_{\infty}\Bigg\},
\end{align}
where, for any $s\in S$,
$d^s(\cdot; \pi): S\times A \rightarrow \mathbb{R}$ is the \emph{state-action occupancy measure} induced due to \(\pi\), defined as follows:
$$
    d^{s}(s',a';\pi)   \coloneqq \pi(a'|s')\mathbb{E}_\pi\left[\sum_{k=0}^{\infty}\discount^k \mathbbm{1}(s^k =s') \Big|s^0 = s \right],
$$
where $a^k \sim \pi\left(s^k\right)$, and $s^{k+1} \sim P\left(\cdot | s^k, a^k\right)$.
{{} Intuitively, for any \(\Psi \in \tilde{\mathcal{F}}^{\mathcal{G}}\), there exists \(\phi: S \times A \to \mathbb{R}\) such that \(\Psi(s, \pi)\) represents the long-horizon discounted value of a Markov decision process with state transition \(P\), starting from state \(s\), using policy \(\pi \in \Pi\), and one-step utility \(\phi\).}

\begin{proposition}\label{lemma: Phi_constructed_MDP}
For any Markov $\alpha$-potential game $\game$, $\Tilde{\mathcal{F}}^\game \subseteq \mathcal{F}^\game.$ That is, $\bar\alpha \geq \alpha$ with
\begin{equation}\label{eqn: relaxed_alpha_bar}
    \bar{\alpha}\coloneqq\inf_{\Psi \in \Tilde{\mathcal{F}}^\game} 
            \widehat{\md}(\Psi,\mathcal{G}).
\end{equation}
\end{proposition}
{Using Remark \ref{rmk:NE_optimizer}, we can conclude that any optimizer of \(\bar{\Psi}\), where \(\widehat{\md}(\bar{\Psi}, \mathcal{G}) = \bar{\alpha}\), can be used to find a \(\bar{\alpha}\)-stationary NE for the game \(\mathcal{G}\).
}

Next, we provide an optimization based method to compute \(\bar{\alpha}\). Note that \eqref{eqn: relaxed_alpha_bar} can be reformulated as follows:
\begin{align}
    \min_{\substack{ y\in \R \\
    \phi:S\times A\ra \R}}\quad  &  y\label{eq: LinearSemiInfiniteFormulation}\\ 
    \text{s.t.}\quad &  \Big|  \sum_{s',a'}(d^{s}(s',a';\pi_i,\pi_{-i}) - d^{s}(s',a';\pi_i',\pi_{-i}))  \cdot (\phi-u_i)(s',a')\Big| \leq y, \tag{C1} \\ &\forall s\in S, ~\forall i \in \playerSet, ~ \forall \pi_i, \pi'_i \in {\Pi}_i, ~ \forall \pi_{-i}\in {\Pi}_{-i}, \notag
    \\ & |\phi(s,a) | \leq  N \max_{i\in\playerSet} \| u_i\|_{\infty},\quad  \forall s\in S, a\in A.\notag
\end{align}
Here, we use \(    
    V_i(s,\pi)= \sum_{s'\in S, a'\in A}d^{s}(s',a';\pi) u_i(s',a')
\) and \(\Psi(s,\pi)= \sum_{s'\in S, a'\in A}d^{s}(s',a';\pi) \phi(s',a')\), for some \(\phi: S\times A \rightarrow \mathbb{R}\). 
Note that 
\eqref{eq: LinearSemiInfiniteFormulation} is a semi-infinite linear programming where the objective is a linear function with an uncountable number of linear constraints. Particularly, in (C1) there is one linear constraint corresponding to each tuple \((s, i,\pi_i,\pi_i',\pi_{-i})\). Moreover, the coefficients of each linear constraint in (C1) are composed of state-action occupancy measures which are computed by solving a Bellman equation. There are a number of algorithmic approaches to solve semi-infinite linear programming problems 
\cite{tadic2003randomized,hettich1993semi}. 
In Appendix \ref{sec: FindNearPotential}, we adopt an algorithm from \cite{tadic2003randomized} to solve \eqref{eq: LinearSemiInfiniteFormulation} and find (an upper bound of) \(\alpha\). Figure \ref{fig:gamma_var} illustrates how \(\alpha\) varies with different discount factors \(\gamma\) in a PMTG. Note that the growth of the numerical estimate of \(\alpha\) is much more benign than the analytical characterization obtained in Proposition \ref{prop: PerturbedMarkovTeamGame}. 

\begin{figure}
        \centering
\includegraphics[width= 0.4\linewidth]{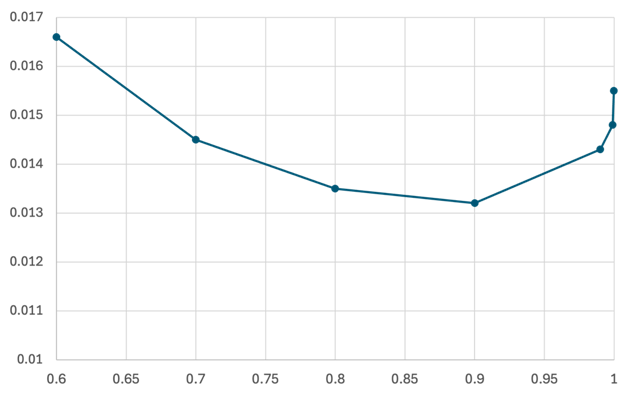}
        \caption{Variation of \(\alpha\) with the discount factor in the perturbed Markov team game with \(N=3\) and perturbation parameter \(\kappa = 0.1\). The setup of this game is same as that in Section \ref{sec: numerics} with \(\lambda_1=\lambda_3=0.8,\lambda_2=\lambda_4=0.2\).}
        \label{fig:gamma_var}
    \end{figure}

\section{Approximation algorithms and Nash-regret analysis}\label{sec: Algorithms}
In this section, we present two equilibrium approximation algorithms for Markov \(\alpha\)-potential games: the \textit{projected gradient-ascent algorithm}, proposed in \cite{ding2022independent} for MPGs, and the \textit{sequential maximum improvement algorithm}, where each player's strategy is updated based on a one-stage smoothed best response. We also derive non-asymptotic convergence rates for these algorithms in terms of Nash-regret, defined as 
\(
\text{Nash-regret}(T) := \frac{1}{T} \sum_{t=1}^T \max_{i \in \playerSet} R_i^{(t)},
\)
where 
\(
R_i^{(t)} := \max_{\pi_i^{\prime} \in \Pi_i} V_i\left( \mu, \pi_i^{\prime}, \pi_{-i}^{(t)} \right) - V_i\left( \mu, \pi^{(t)} \right),
\)
and \(\pi^{(t)}\) denotes the \(t\)-th iterate. Note that Nash-regret is always non-negative; if \(\text{Nash-regret}(T) \leq \epsilon\) for some \(\epsilon > 0\), then there exists \(t^\dagger\) such that \(\pi^{(t^\dagger)}\) is an \(\epsilon\)-stationary NE.

\subsection{Projected gradient-ascent algorithm} \label{sssec: AlgPG}
First, we define some useful notations. Given a joint policy \(\pi\in \Pi\), define player $i$'s \emph{$Q$-function} as 
\(
        Q_i^\pi(s,a_i)= \mathbb{E}_{a_{-i}\sim \pi_{-i}(s)}  \big[ u_i(s,a_i,a_{-i})
 + \discount \sum_{s'\in S}P(s'|s,a_i, a_{-i}) V_i(s',\pi)\big],
\)
   and denote \(\bQ_i^\pi(s) = (Q_i^\pi(s,a_i))_{a_i\in A_i}\).  
Let $\kappa_\mu$ denote the maximum distribution mismatch of $\pi$ relative to $\mu$, and let $\tilde{\kappa}_\mu$ denote the minimax value of the distribution mismatch of $\pi$ relative to $\mu$. That is,
     \begin{equation}\label{eqn: kappa mu defs}
         \kappa_\mu := \sup _{\pi \in \Pi}\left\|d_\mu^\pi / \mu\right\|_{\infty}, \quad \tilde{\kappa}_\mu := \inf_{\nu \in \mathcal{P}({S})} \sup_{\pi\in \Pi} \|d_\mu^\pi/\nu \|_\infty,
     \end{equation}
      where the division $d_\mu^\pi/\nu$ is evaluated in a component-wise manner.
      The algorithm iterates for \(T\) steps. We abuse the notation to use $Q_i^{(t)}$ to denote $Q_i^{\pi^{(t)}}$, and $\bQ_i^{(t)}$ to denote $\bQ_i^{\pi^{(t)}}$. In every step \(t\in [T-1]\), each player \(i\in \playerSet\) updates her policy following a projected gradient-ascent algorithm as in \eqref{eqn: policy grad Q update}. 
      
\begin{algorithm}\caption{Projected Gradient-Ascent Algorithm}
    \begin{algorithmic}
        \STATE{\textbf{Input:} Step size \(\eta\), for every \(i\in \playerSet, a_i\in A_i, s\in S\), set \(\pi_i^{(0)}(a_i|s)=1/|A_i|\).
}
\FOR{\(t = 0,1,2,...,T-1\)}
\STATE{For every \(i\in \playerSet, s\in S\), update the policies as follows}
\begin{equation}\label{eqn: policy grad Q update}
      \pi_i^{(t+1)}(s)= 
      {\texttt{Proj}}_{\Pi_i}\left(  \pi_i^{(t)}(s) + \eta {\bQ}_i^{(t)}(s) \right), 
\end{equation}
where 
\(\texttt{Proj}_{\Pi_i}\) denotes the orthogonal projection on \(\Pi_i\).
\ENDFOR
\end{algorithmic}\label{alg: PolicyGradient}
\end{algorithm}
{{}
\begin{remark}
    Algorithm 1 is not the standard policy gradient algorithm. The standard policy gradient is given by \(
\frac{\partial V_i^\pi(\rho)}{\partial \pi_i\left(a_i \mid s\right)} = 1/({1-\gamma})\cdot d_\rho^\pi(s) {Q}_i^\pi\left(s, a_i\right)\) \cite{ leonardos2021global}.
The RHS in the this equation scales with the state visitation frequency $d_\rho^\pi(s)$, which results in slow learning rate for states with low visitation frequencies under the current policy. To address this issue, \cite{ding2022independent} proposed to remove the term $d_\rho^\pi(s) /(1-\gamma)$ from the standard policy gradient update, which accelerates the learning for states with low visitation probabilities. We adopted the convention of \cite{ding2022independent} to call it ``policy gradient-ascent algorithm".
\end{remark}
}

\begin{theorem}\label{thm: PG regret bound}
    Given a Markov \(\mnpg\)-potential game with an \(\mnpg\)-potential function \(\Phi\) and an initial state distribution \(\mu\),
    the policy updates generated from Algorithm \ref{alg: PolicyGradient} satisfies
    \begin{itemize}
        \item[(i)] $\text{Nash-regret}(T)\leq  \mathcal{O}\lr{\frac{ \sqrt{\tilde{\kappa}_\mu \bar{A} N} }{(1-\discount)^{\frac{9}{4}} }\left( \frac{C_\Phi }{T}+ N^2 \mnpg \right)^\frac{1}{4}}$  with $\eta= \frac{(1-\discount)^{2.5} \sqrt{C_\Phi + N^2 \mnpg T}}{2 N \bar{A} \sqrt{T}}$;
        \item[(ii)] $\text{Nash-regret}(T) \leq  \mathcal{O}\left(\sqrt{\frac{ \min{(\kappa_\mu, |S|)}^4 N \bar{A}  }{(1-\discount)^6 }} \left(\frac{C_{\Phi}}{T}+ N^2 \alpha \right)^{\frac{1}{2}}\right)$ with $ \eta = \frac{(1-\discount)^4}{8 \min{( \kappa_\mu, |S| )}^3 N \bar{A}}$,
    \end{itemize}
where  \(\bar{A} \coloneqq \max_{i\in \playerSet }|A_i|\),
 $\kappa_\mu$ and $\tilde{\kappa}_\mu$ are defined in \eqref{eqn: kappa mu defs},
and $C_\Phi >0$ is a constant satisfying $\left|\Phi(\mu,\pi)-\Phi(\mu,\pi^{\prime})\right| \leq C_{\Phi}$ for any $\pi, \pi^{\prime} \in \Pi, \mu \in \mathcal{P}(S).$ 
\end{theorem}

We emphasize that the Nash-regret bounds in Theorem \ref{thm: PG regret bound} (also Theorem \ref{thm: SeqMaxImprovement} in the next section)
will hold even without knowing the exact form of $\Phi$ and the game elasticity parameter $\alpha.$
It is sufficient to have an upper bound $\bar{\alpha}$ for $\mnpg$ and an associated function $\Psi$ for which this upper bound holds.
In the special case of $\alpha=0$,
     the Nash-regret bound in Theorem \ref{thm: PG regret bound} recovers the Nash-regret bound from  \cite{ding2022independent} for MPG.

The proof of Theorem \ref{thm: PG regret bound} is inspired by \cite{ding2022independent} for the Nash-regret analysis of MPGs. First, we state multi-player performance difference lemma (Lemma \ref{lemma: performance diff}), which enables bounding the Nash-regret of an algorithm by summing the norms of policy updates, denoted as \(\|\pi_i^{(t+1)} - \pi_i^{(t)} \|\). The main modification for our analysis is to bound the sum of these policy update differences by the game elasticity parameter \(\alpha\) and the change in the \(\alpha\)-potential function \(\Phi\) (Lemma \ref{lemma: policy improve}).

\begin{lemma}[Performance difference (Lemma 1 in \cite{ding2022independent})]\label{lemma: performance diff} For any $i\in\playerSet$, $\mu\in\mathcal{P}(S)$, 
 ${\pi}'_i, {\pi}_i \in \Pi_i$, and $\pi_{-i}\in \Pi_{-i}$,
$$
\begin{aligned}
& V_i(\mu, {{\pi}'_i, \pi_{-i}})-V_i(\mu, {{\pi}_i, \pi_{-i}})  =\frac{1}{1-\discount} \sum_{s, a_i} d_\mu^{{\pi}'_i, \pi_{-i}}(s) \cdot\left({\pi}'_i(a_i|s)-{\pi}_i(a_i|s)\right){Q}^{{\pi}_i, \pi_{-i}}_i\left(s, a_i \right). 
\end{aligned}
$$
\end{lemma}

\begin{lemma}[Policy improvement] \label{lemma: policy improve} For Markov $\alpha$-potential game \eqref{eqn: def eps MPG} with any state distribution $\nu{\in \mathcal{P}(S)}$, the $\alpha$-potential function $\Phi(\nu, \pi)$ at two consecutive policies $\pi^{(t+1)}$ and $\pi^{(t)}$ in Algorithm \ref{alg: PolicyGradient} satisfies
\begin{equation*}
    \begin{aligned}
     \text{(i)}\,& \Phi(\nu, {\pi^{(t+1)}})-\Phi(\nu, {\pi^{(t)}}) + N^2 \alpha \geq -\frac{4 \eta^2 \bar{A}^2N^2}{(1-\discount)^5}   +\frac{1}{2 \eta(1-\discount)} \sum_{i\in \playerSet ,s\in S} d_\nu^{\pi_i^{(t+1)}, \pi_{-i}^{(t)}}(s)\left\|\pi_i^{(t+1)}(s) -\pi_i^{(t)}(s)\right\|^2;\\
      \text{(ii)}\, &\Phi(\nu, {\pi^{(t+1)}})-\Phi(\nu, {\pi^{(t)}}) + N^2 \alpha \geq 
       \frac{1}{2 \eta(1-\discount)} \left(1-\frac{4 \eta \kappa_\nu^3 \bar{A} N}{(1-\discount)^4}\right) \sum_{i\in \playerSet, s\in S} d_\nu^{\pi_i^{(t+1)}, \pi_{-i}^{(t)}(s)} 
     \cdot \left\|\pi_i^{(t+1)}(s)-\pi_i^{(t)}(s)\right\|^2.
 \end{aligned}
 \end{equation*}

\end{lemma}


\subsubsection*{Proof of Theorem \ref{thm: PG regret bound}}
Using the variational characterization of projection operation in \eqref{eqn: policy grad Q update}, we note that  
for any $\pi_i^{\prime} \in \Pi_i$,
$$
\left\langle\pi_i^{\prime}(s)-\pi_i^{(t+1)}(s), \eta {\bQ}_i^{(t)}(s)-\pi_i^{(t+1)}(s)+\pi_i^{(t)}(s)\right\rangle_{A_i} \leq 0. 
$$
Therefore, for any $\pi_i^{\prime} \in \Pi_i$,
$$
\begin{aligned}
 \left\langle\pi_i^{\prime}(s)-\pi_i^{(t)}( s), {\bQ}_i^{(t)}(s)\right\rangle_{A_i} 
&= \left\langle\pi_i^{\prime}(s)-\pi_i^{(t+1)}(s), {\bQ}_i^{(t)}(s)\right\rangle_{A_i} +\left\langle\pi_i^{(t+1)}(s)-\pi_i^{(t)}(s), {\bQ}_i^{(t)}(s)\right\rangle_{A_i} \\
&\leq  \frac{1}{\eta}\left\langle\pi_i^{\prime}(s)-\pi_i^{(t+1)}(s), \pi_i^{(t+1)}(s)-\pi_i^{(t)}(s)\right\rangle_{A_i}\\
&\quad+\left\langle\pi_i^{(t+1)}(s)-\pi_i^{(t)}(s), {\bQ}_i^{(t)}(s)\right\rangle_{A_i}.
\end{aligned}
$$
Note that for any two probability distributions $p_1$ and $p_2$, $\|p_1 - p_2 \|\leq \| p_1 - p_2 \|_1 \leq 2.$ Therefore,
    \begin{align}
    \left\langle\pi_i^{\prime}(s)-\pi_i^{(t)}(s), {\bQ}_i^{(t)}(s)\right\rangle_{A_i}  
& \leq  \frac{2}{\eta}\left\|\pi_i^{(t+1)}(s)-\pi_i^{(t)}(s)\right\| +\left\|\pi_i^{(t+1)}(s)-\pi_i^{(t)}(s)\right\|\left\|{\bQ}_i^{(t)}(s)\right\| \notag\\
&\leq \frac{3}{\eta}\left\|\pi_i^{(t+1)}( s)-\pi_i^{(t)}( s)\right\|\label{eqn: calculation1 nash}
,
\end{align}
where the last inequality is due to $\left\|{\bQ}_i^{(t)}(s)\right\| \leq \frac{\sqrt{\bar{A}}}{1-\discount} \text { and } \eta \leq \frac{1-\discount}{\sqrt{\bar{A}}}$. Hence, by Lemma \ref{lemma: performance diff} and \eqref{eqn: calculation1 nash},
    \begin{align*}
       T&\cdot \text{Nash-regret}(T) =  \sum_{t=1}^T \max _{i\in\playerSet, \pi_i^{\prime}} V_i(\mu,{\pi_i^{\prime}, \pi_{-i}^{(t)}})-V_i(\mu,{\pi^{(t)}})\\
      {=}& \sum_{t=1}^T \max _{\pi_i^{\prime}} \sum_{s, a_i} \frac{d_\mu^{\pi_i^{\prime}, \pi_{-i}^{(t)}}(s)}{1-\discount} (\pi_i^{\prime}\left(a_i|s\right)-\pi_i^{(t)}\left(a_i|s\right))\bQ_i^{(t)}\left(s, a_i\right)\\
    {\leq} &\frac{3}{\eta(1-\discount)} \sum_{t=1}^T \sum_s d_\mu^{\pi_i^{\prime}, \pi_{-i}^{(t)}}(s)\left\|\pi_i^{(t+1)}(s)-\pi_i^{(t)}(s)\right\|, 
    \end{align*}
    where in the second line we slightly abuse the notation $i$ to represent $\arg\max_i$ and in the last line we slightly abuse the notation $\pi_i'$ to represent $\arg\max_{\pi_i'}.$ Now, continuing the above calculation with an arbitrary $\nu \in \mathcal{P}( {S})$ and using 
    $$\frac{d_\mu^{\pi_i^{\prime}, \pi_{-i}^{(t)}}(s)}{d_\nu^{\pi_i^{(t+1)}, \pi_{-i}^{(t)}}(s)} \leq \frac{d_\mu^{\pi_i^{\prime}, \pi_{-i}^{(t)}}(s)}{(1-\discount) \nu(s)} \leq \frac{\sup _{\pi \in \Pi}\left\|d_\mu^{\pi } / \nu\right\|_{\infty}}{1-\discount}$$
    to get:
    \begin{align}\label{eqn: bound T nashT}
       &T\cdot \text{Nash-regret}(T)\notag\\
      & {\leq} \frac{ 3 \sqrt{\sup _{\pi \in \Pi}\left\|d_\mu^\pi / \nu\right\|_{\infty}}}{\eta(1-\discount)^{\frac{3}{2}}} \sum_{t=1}^T \sum_s \sqrt{d_\mu^{\pi_i^{\prime}, \pi_{-i}^{(t)}}(s) d_\nu^{\pi_i^{(t+1)}, \pi_{-i}^{(t)}}(s)} \cdot\left\|\pi_i^{(t+1)}(s)-\pi_i^{(t)}(s)\right\|\notag\\
       &{\leq} \frac{3 \sqrt{\sup _{\pi \in \Pi}\left\|d_\mu^\pi / \nu\right\|_{\infty}}}{\eta(1-\discount)^{\frac{3}{2}}} \sqrt{\sum_{t=1}^T \sum_s d_\mu^{\pi_i^{\prime}, \pi_{-i}^{(t)}}(s)}\cdot \sqrt{\sum_{t=1}^T \sum_{i=1}^{N} \sum_s d_\nu^{\pi_i^{(t+1)}, \pi_{-i}^{(t)}(s)}\left\|\pi_i^{(t+1)}(s)-\pi_i^{(t)}(s)\right\|^2},
    \end{align}
    where the last inequality follows from the Cauchy-Schwarz inequality and replacing $i$ ($\arg\max_i$) by the sum over all players.
    There are two choices to proceed beyond \eqref{eqn: bound T nashT}:\\
1) Fix $\epsilon > 0.$ Take $\nu^*_\epsilon \in \mathcal{P}(S)$ such that $\sup_{\pi\in\Pi} \left\| {d_\mu^\pi}/{\nu_\epsilon^*}\right\|_\infty - \epsilon \leq \inf_{\nu \in \mathcal{P}(S)} \sup_{\pi \in \Pi}\left\| {d_\mu^\pi}/{\nu_\epsilon^*}\right\|_\infty.$ Then apply Lemma \ref{lemma: policy improve} (i) and the fact $| \Phi (\nu, \pi) -  \Phi (\nu, {\pi'}) | \leq C_\Phi$ for any $\pi, \pi' \in \Pi, \nu \in \mathcal{P}(S)$
to get
\begin{align*}
   \text{Nash-regret}(T) 
   \leq &\frac{3}{T} \Bigg(\frac{ 2 (\widetilde{\kappa}_\mu+ \epsilon) T (C_{\Phi} + N^2 \alpha \cdot T)}{\eta(1-\discount)^2} 
   + \frac{8(\widetilde{\kappa}_\mu + \epsilon) \eta T^2 \bar{A}^2 N^2}{(1-\discount)^7}\Bigg)^\frac{1}{2}.
\end{align*}
By letting $\epsilon$  to $0$ and taking step size $\eta =  \frac{(1-\discount)^{2.5} \sqrt{C_\Phi + N^2 \alpha T}}{2 N \bar{A} \sqrt{T}}$, we have
$$
\text{Nash-regret}(T) \leq \frac{3 \cdot 2^{\frac{3}{2}} \sqrt{{\tilde{\kappa}_\mu} \bar{A}N} }{(1-\discount)^{\frac{9}{4}} } \left( \frac{C_\Phi }{T}+ N^2 \alpha \right)^\frac{1}{4}. 
$$
2) We can also proceed \eqref{eqn: bound T nashT}  with Lemma \ref{lemma: policy improve} (ii) and $\eta \leq \frac{(1-\discount)^4}{8 \kappa_\nu^3 N \bar{A}}$ to get 
\begin{align*}
     \text{Nash-regret}(T)
     \leq 6 \sqrt{\frac{\sup _{\pi \in \Pi}\left\|\frac{d_\mu^\pi}{\nu} \right\|_{\infty}  (C_{\Phi} + N^2 \alpha \cdot T)}{\eta T (1-\discount)^2}}.
\end{align*}
We next discuss two special choices of $\nu$ for proving our bound. First, if $\nu=\mu$, then $\eta \leq \frac{(1-\discount)^4}{8 \kappa_\mu^3 N \bar{A}}$. By letting $\eta=\frac{(1-\discount)^4}{8 \kappa_\mu^3 N \bar{A}}$, the last square root term can be bounded by 
$
\mathcal{O}\Bigg(\sqrt{\frac{\kappa_\mu^4 N \bar{A}  (C_{\Phi}+ N^2 \alpha \cdot T)}{T(1-\discount)^6}}\Bigg).
$
Second, if $\nu=\frac{1}{|S|} \mathbf{1}$, the uniform distribution over ${S}$, then $\kappa_\nu \leq \frac{1}{S}$, which allows a valid choice $\eta=\frac{(1-\discount)^4}{8 |S|^3 N \bar{A}} \leq \frac{(1-\discount)^4}{8 \kappa_\nu^3 N \bar{A}}$. Hence, we can bound the last square root term by 
$
\mathcal{O}\left(\sqrt{\frac{|S|^4 N \bar{A}  (C_{\Phi}+ N^2 \alpha \cdot T)}{T (1-\discount)^6}}\right).
$
Since $\nu$ is arbitrary, combining these two special choices completes the proof.

\subsection{Sequential maximum improvement algorithm}\label{sssec: AlgBR}
Let us first fix some  notations. Associated with any Markov game \(\game\), we define \emph{smoothed} (or regularized) Markov game $\Gtilde$, where the expected one-stage payoff of each player $i$ with state $s$  under the joint policy $\pi$ is \(
    \utilde_i(s, \pi)= \mathbb{E}_{a\sim \pi(s)}[u_i(s, a)] - \tau \sum_{j\in \playerSet}\nu_j(s,\pi_j),\) where 
\(
 \nu_j(s, \pi_j) \coloneqq \sum_{a_j \in A_j} \pi_j(a_j|s) \log(\pi_j(a_j|s))
\) 
is the entropy function,
and $\tau > 0$ denotes the regularization parameter. With the smoothed one-stage payoffs, the expected total discounted infinite horizon payoff of player $i$ under policy \(\pi\) is given by
\begin{equation}\label{eqn: smoothed value function}
    \tilde{V}_i(s, \policy) = \avg_{\pi}\Big[\sum_{k=0}^{\infty}\discount^k \big(\stagePayoff_i(s^k,a^k) - \tau\sum_{j\in \playerSet }\nu_j(s^k,\pi_j)\big)|s^0=s\Big],
\end{equation}
for every \(s\in S\). The \emph{smoothed} (or entropy-regularized) \(Q\)-function is given by 
{   
\begin{equation}\label{eqn: smooth Q}
    \begin{aligned}
  \tilde{Q}^\pi_i(s,a_i) &= \sum_{a_{-i}\in A_{-i}} \pi_{-i}(a_{-i}|s)\Big( {{u}_i(s,a_i,a_{-i}) }- \tau \sum_{j\in I_N }\nu_j(s,\pi_j) 
 +\discount\sum_{s'\in S}P(s'|s,a)\tilde{V}_i(s',\pi)\Big). 
\end{aligned}
\end{equation}}

Algorithm \ref{alg: SequentialPureBRMaxPlayer} 
has two main components: first, it computes the optimal one-stage policy update using the smoothed $Q$-function. Here, the vector of smoothed $Q$-functions is denoted by \(\tilde{\bQ}^\pi_i(s) = (\tilde{Q}_i^\pi(s,a_i))_{a_i\in A_i}\). Second, it selects the player who achieves the maximum improvement in the current state to adopt her one-stage policy update, with the policy for the remaining players and the remaining states unchanged.  
More specifically, 
the algorithm iterates for \(T\) time steps. In every time step \(t\in [T-1]\), based on the current policy profile $\pi^{(t)}$, {    
we abuse the notation to use $ \tilde{Q}_i^{(t)} $ to denote $\tilde{Q}_i^{\pi^{(t)}}$ and $ \tilde{\bQ}_i^{(t)} $ to denote $\tilde{\bQ}_i^{\pi^{(t)}}$. 
The expected smoothed \(Q\)-function of player $i$ is computed as \(\tilde{Q}_i^{(t)}(s,\pi_i) = \sum_{a_i\in A_i} \pi_i(a_i|s)\tilde{Q}^{(t)}_i(s,a_i)\) for all $s \in S$ and all $i \in \playerSet$.} Then, each player computes her one-stage best response strategy by maximizing the smoothed \(Q\)-function: for every \(i\in \playerSet, a_i\in A_i, s\in S,\) 
\begin{align}\label{eq: SmoothedBR}
\text{BR}_{i}^{(t)}(a_i|s) &=  \lr{\underset{\pi_i'\in \Pi_i} {\arg\max} \ \left(\tilde{Q}^{(t)}_i(s,\pi_i')- \tau \nu_i(s,\pi_i')\right)}_{a_i} \notag\\
&=\frac{\exp(\tilde{Q}_i^{(t)}(s,a_i)/\tau)}{\sum_{a_i'\in A_i}\exp(\tilde{Q}_i^{(t)}(s,a_i')/\tau) },
\end{align}
and its maximum improvement of smoothed \(Q\)-function value in comparison to current policy is \begin{align}\label{eq:max_improv}
\Delta_i^{(t)}(s) &= \max_{\pi_i'\in \Pi_i}\lr{\tilde{Q}^{(t)}_i(s,\pi_i')- \tau \nu_i(s,\pi_i')} - \lr{\tilde{Q}^{(t)}_i(s,\pi_i^{(t)})- \tau 
\nu_i(s,\pi_i^{(t)})}, \quad \forall s\in S.
\end{align}
{Note that computing \(\Delta_i^{(t)}\) is straightforward as the maximization in \eqref{eq:max_improv} is attained at \(\text{BR}_{i}^{(t)}(s)\) (cf. \eqref{eq: SmoothedBR}).}

If the maximum improvement $\Delta_i^{(t)}(s) \leq 0$ for all $i \in \playerSet$ and all $s \in S$, then the algorithm terminates and returns the current policy profile $\pi^{(t)}$. Otherwise, the algorithm chooses a tuple of player and state $(\imax, \smax)$ associated with the maximum improvement value $\Delta_i^{(t)}(s)$, and updates the policy of player $\imax$ in state $\smax$ with her one-stage best response strategy\footnotemark{}.\footnotetext{Any tie-breaking rule can be used here if the maximum improvement is achieved by more than one tuple.} The policies of all other players and other states remain unchanged. 
\begin{remark}
    Using entropy regularization in \eqref{eq: SmoothedBR} has several advantages: (i) unlike Algorithm \ref{alg: PolicyGradient}, it avoids projection over simplex which can be costly in large-scale problems; (ii) it ensures that the optimizer is unique. 
\end{remark}

\begin{remark}
Algorithm \ref{alg: SequentialPureBRMaxPlayer} is reminiscent of the ``Nash-CA'' algorithm\footnotemark{}\footnotetext{Unlike this paper, the Nash-CA Algorithm in \cite{song2021can} was proposed in the context of finite horizon Markov potential games.} proposed in \cite{song2021can}, which requires each player to \textit{sequentially} compute the best response policy using an RL algorithm in each iteration, while keeping the strategies of other players fixed. Such sequential best response algorithms are known to ensure finite improvement in the potential function value in potential games \cite{monderer1996potential}, which ensures convergence. Meanwhile, Algorithm \ref{alg: SequentialPureBRMaxPlayer} does not compute the best response strategy in the updates. Instead, it only computes a smoothed \emph{one-step} optimal deviation, as per \eqref{eq: SmoothedBR}, for the current state. The policies for the remaining states and other players are unchanged. The analysis of such one-step deviation-based dynamics is non-trivial and requires new techniques, as highlighted in the next section.

\end{remark}
\begin{remark}
     While Algorithm \ref{alg: PolicyGradient} can be run independently by each player in a decentralized fashion, Algorithm \ref{alg: SequentialPureBRMaxPlayer} is centralized as  players do not update their policies simultaneously. Comparing Nash regret in Theorems \ref{thm: PG regret bound} and \ref{thm: SeqMaxImprovement}, it is evident that the coordination in Algorithm \ref{alg: SequentialPureBRMaxPlayer} ensures better scaling of regret with respect to the number of players.
\end{remark}

\begin{algorithm}\caption{Sequential Maximum  Improvement Algorithm}
\begin{algorithmic}
\STATE{\textbf{Input:} Smoothness parameter \(\tau\), for every \(i\in \playerSet , a_i\in A_i, s\in S,\) set \(\pi_i^{(0)}(a_i|s) = 1/|A_i|\).}

\FOR{\(t = 0,1,2,...,T-1\)}
\STATE{Compute the maximum improvement of smoothed \(Q\)-function $\{\Delta_i^{(t)}(s)\}_{i \in \playerSet , s\in S}$ as in \eqref{eq:max_improv}.}

\IF{ \(\Delta_i^{(t)}(s) \leq 0\) for all $i \in \playerSet $ and all $s \in S$
}
\STATE{return $\pi^{(t)}.$
}
\ELSE
\STATE{Choose the tuple \((\imax, \smax)\) with the maximum improvement}
\begin{align}\label{eq: UpdateSequence}
(\imax, \smax) \in \underset{i\in \playerSet , s \in S }{\arg\max} ~ \Delta_i^{(t)}(s),
\end{align}
and update policy
\begin{align}\label{eq: PolicyMaxMaxImprove}
\pi_{\imax}^{(t+1)}(a|\smax)  =& \text{BR}_{\imax}^{(t)}(a|\smax), ~ \forall a\in A_{\imax}, 
 \\
    \pi_i^{(t+1)}(s) =& \pi_i^{(t)}(s)~ \forall (i, s) \neq (\imax, \smax).\notag
\end{align}
\ENDIF
\ENDFOR
\end{algorithmic}\label{alg: SequentialPureBRMaxPlayer}
\end{algorithm}

\begin{theorem}\label{thm: SeqMaxImprovement}
    Consider a Markov \(\mnpg\)-potential game with an \(\mnpg\)-potential function \(\Phi\) and initial state distribution \(\mu\) such that  \( \bar{\mu}\coloneqq \min_{s\in S}\mu(s) > 0\). 
    Denote \(\bar{A} \coloneqq \max_{i\in \playerSet }|A_i|\) and \(C \coloneqq \max_{i\in \playerSet } \| u_i\|_{\infty}\). 
    Then the policy updates generated from Algorithm \ref{alg: SequentialPureBRMaxPlayer} with parameter
    \begin{align} \label{eq: TauDef}
        \tau = \frac{1}{N}\Bigg(&\log(\bar{A}) + \frac{\log(\bar{A})}{\sqrt{\mnpg + \frac{C_{\Phi}}{T}}}\sqrt{\frac{2\log(\bar{A})}{(1-\discount)}}\sqrt{\frac{N}{T}} + \frac{2\sqrt{\bar{\mu}}(1-\discount)\log(\bar{A})}{{8C\sqrt{\bar{A}}}\sqrt{\mnpg + \frac{C_{\Phi}}{T}}} \Bigg)^{-1}
    \end{align}
    has the $\text{Nash-regret}(T)$ bounded by
    \begin{align*}
    \mathcal{O}\Bigg(\frac{\sqrt{N^{3/2}\bar{A}}\log(\bar{A})}{(1-\discount)^{5/2}\sqrt{\bar{\mu}}}
        \max\bigg\{\Big(\mnpg + \frac{C_{\Phi}}{T}\Big)^{\frac{1}{2}} , \Big(\mnpg + \frac{C_{\Phi}}{T}\Big)^{\frac{1}{4}} \bigg\}\Bigg),
    \end{align*}
where $C_\Phi >0$ is a constant satisfying $\left|\Phi(\mu,\pi)-\Phi(\mu,\pi^{\prime})\right| \leq C_{\Phi}$ for any $\pi, \pi^{\prime}\in \Pi, \mu \in \mathcal{P}(S).$ 
\end{theorem}

In the special case of $\alpha=0$,
       Theorem \ref{thm: SeqMaxImprovement} provides a Nash-regret bound of Algorithm \ref{alg: SequentialPureBRMaxPlayer} for the case of MPGs.
       
 To prove Theorem \ref{thm: SeqMaxImprovement}, we first develop a smoothed version of the multi-agent performance difference lemma (Lemma \ref{lem: performanceDiffPG}). This lemma bounds the difference in the smoothed value function $\Tilde{V}_i$ by the changes in policy $\pi_i$, which is further bounded by the maximum improvements $\Delta_i^{(t)}$. 
Lemma \ref{lem: PerturbGameInitialGameDiff} bounds the discrepancy between the value function $V_i$ and the smoothed value function $\Tilde{V}_i$. Lemma \ref{lem: performanceDiffPG} and \ref{lem: PerturbGameInitialGameDiff} together implies that the Nash-regret of Algorithm \ref{alg: SequentialPureBRMaxPlayer} is bounded by $\Delta_i^{(t)}$ \eqref{eq:max_improv}. Finally, Lemma \ref{lemma:technical} establishes $\Delta_i^{(t)}$ can be bounded by policy updates, which in turn, are bounded by $\alpha$ and the difference in the $\alpha$-potential function $\Phi$.

\begin{lemma}[Smoothed performance difference]\label{lem: performanceDiffPG}
For any $i\in \playerSet,$ $\mu\in \mathcal{P}(S)$, \(\pi_i,\pi_i'\in\Pi_i, \pi_{-i}\in \Pi_{-i}\),  
\begin{align*}
\tilde{\vFunc}_i(\mu,\policy)-&\tilde{\vFunc}_i(\mu,\policy') = \frac{1}{1-\discount}\sum_{s'\in S}d^{\pi}_{\mu}(s') \Big((\pi_i(s') - \pi_i'(s'))^\top 
\cdot\tilde{\bQ}^{\pi'}_i(s')+\tau\ti(s',\policy_i')-\tau\ti(s',\policy_i)\Big),
\end{align*}
where \(\pi=(\pi_i,\pi_{-i})\), and \(\pi'=(\pi_i',\pi_{-i})\).
\end{lemma}

\begin{lemma}\label{lem: PerturbGameInitialGameDiff}
For any \(i\in \playerSet , \mu\in \mathcal{P}(S), \pi_i, \pi_i'\in \Pi_i, \pi_{-i}\in\Pi_{-i}\), 
\(
    \Big|\vFunc_i(\mu, \policy_i,\pi_{-i})-\vFunc_i(\mu, \pi_i',\pi_{-i}) -  (\tilde{\vFunc}_i(\mu, \policy_i,\pi_{-i}) -\tilde{\vFunc}_i(\mu, \pi_i',\pi_{-i}))\Big|\leq  \frac{2\tau N \log(\bar{A})}{1-\discount}.
\)
\end{lemma}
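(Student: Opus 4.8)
The plan is to reduce the claim to a uniform bound on the discounted negative-entropy term that is the only difference between $\vFunc_i$ and $\tilde{\vFunc}_i$. First I would regroup the quantity inside the absolute value by pairing each policy with itself,
$$
\lr{\vFunc_i(\mu,\pi_i,\pi_{-i}) - \tilde{\vFunc}_i(\mu,\pi_i,\pi_{-i})} - \lr{\vFunc_i(\mu,\pi_i',\pi_{-i}) - \tilde{\vFunc}_i(\mu,\pi_i',\pi_{-i})},
$$
so that the two policies live in two self-contained terms. By the definition of the smoothed value function, whose one-stage payoff differs from the original only by the entropy regularizer $\tau\nu_i(s^k,\pi_i)$, the reward contributions cancel and for any policy $(\pi_i,\pi_{-i})$ we get the identity
$$
\vFunc_i(\mu,\pi_i,\pi_{-i}) - \tilde{\vFunc}_i(\mu,\pi_i,\pi_{-i}) = \tau\,\avg\left[\sum_{k=0}^{\infty}\delta^k \nu_i(s^k,\pi_i)\right],
$$
with the expectation taken under the trajectory law induced by $(\pi_i,\pi_{-i})$; the same identity holds at $(\pi_i',\pi_{-i})$ with $\nu_i(s^k,\pi_i')$ in place of $\nu_i(s^k,\pi_i)$.

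The key estimate is the elementary fact that the negative entropy is bounded uniformly: for every $s\in S$ and every $\pi_i\in\Pi_i$,
$$
-\log|A_i| \leq \nu_i(s,\pi_i) = \sum_{a_i\in A_i}\pi_i(s,a_i)\log\pi_i(s,a_i) \leq 0,
$$
so that $|\nu_i(s,\pi_i)|\leq \log|A_i|$. Using this inside each discounted sum and summing the geometric series gives
$$
\left|\tau\,\avg\left[\sum_{k=0}^{\infty}\delta^k \nu_i(s^k,\pi_i)\right]\right| \leq \tau\log|A_i|\sum_{k=0}^{\infty}\delta^k = \frac{\tau\log|A_i|}{1-\delta},
$$
and identically for the term involving $\pi_i'$. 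Applying the triangle inequality to the difference of these two terms then produces the bound $\frac{2\tau\log|A_i|}{1-\delta}$, which is exactly the claim.

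There is essentially no deep obstacle here; the only subtlety worth flagging is that the two expectations are taken under \emph{different} trajectory distributions—those induced by $(\pi_i,\pi_{-i})$ and $(\pi_i',\pi_{-i})$ respectively—so one should not attempt to cancel them against each other stage-by-stage. The uniform entropy bound sidesteps this entirely, since it lets me control each expectation separately without any reference to the underlying state-visitation distribution, at the (cheap) cost of the factor of $2$ in the final constant.
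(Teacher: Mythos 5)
Your proof is correct and follows essentially the same route as the paper's: both cancel the $V_i$ contributions so that only the two discounted entropy-regularizer sums remain, bound each by $\tau\log|A_i|/(1-\delta)$ via the uniform estimate $|\nu_i(s,\pi_i)|\leq\log|A_i|$, and conclude by the triangle inequality. If anything, your version is slightly more careful than the paper's final line, which writes $\max_{s,\pi_i}\nu_i(s,\pi_i)$ where it means the maximum of $|\nu_i|$.
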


\begin{lemma}The following inequalities hold:
\begin{enumerate}\label{lemma:technical}
\item[(1)]  \(\Delta_{\imax}^{(t)}(\smax) \leq \frac{4C\sqrt{\bar{A}}(1+\tau N\log(\bar{A}))}{1-\discount}\|\pi_{\imax}^{(t+1)}(\smax) - \pi_{\imax}^{(t)}(\smax) \|_2,\) for any \(t \in [T]\).
\item[(2)] 
$\sum_{t=0}^{T-1}\|\pi_{\imax}^{(t+1)}(\smax)-\pi_{\imax}^{(t)}(\smax)\|_2^2\leq \frac{2}{\tau\bar{\mu}}\Big(|\Phi(\mu,\pi^{(T)}) - \Phi(\mu,\pi^{(0)})| + \mnpg T  + \frac{2\tau N\log(\bar{A})}{1-\discount}\Big).$
\end{enumerate}
\end{lemma}

\subsubsection*{Proof of Theorem \ref{thm: SeqMaxImprovement}}
First, we bound the instantaneous regret \(R_i^{(t)}\) for any arbitrary player \(i\in \playerSet \) at time \(t\in [T]\). Recall that  
\(
\begin{aligned}
     {R}_i^{(t)} &= {V}_i(\mu,\pi_i^\dagger, \pi_{-i}^{(t)})-{V}_i(\mu,\pi^{(t)}), 
\end{aligned}
\)
where \(\pi_i^\dagger \in 
\arg\max _{\pi_i^{\prime}\in \Pi_i} {V}_i(\mu,\pi_i^{\prime}, \pi_{-i}^{(t)})\).
By Lemma \ref{lem: PerturbGameInitialGameDiff},
\[
   {R}_i^{(t)} \leq   \tilde{V}_i(\mu,\pi_i^\dagger, \pi_{-i}^{(t)})-\tilde{V}_i(\mu,\pi^{(t)}) + \frac{2\tau N\log(\bar{A})}{(1-\discount)}.
\]
Next, note that for any \(i\in \playerSet , \mu\in \mathcal{P}(S)\), by Lemma \ref{lem: performanceDiffPG},  
\begin{align*}
&\tilde{V}_i(\mu,{\pi_i^\dagger,\pi_{-i}^{(t)}})-  \tilde{V}_i(\mu,\pi_i^{(t)},\pi_{-i}^{(t)})  \\ 
&{\leq} \frac{1}{1-\discount}  \sum_{s\in S} d_{\mu}^{\pi_i^\dagger,\pi_{-i}^{(t)}}(s)
\bigg(
\tau (\nu_i(s,\pi_i^{(t)}) -\nu_i(s,\pi_i'))
+
\max_{\pi_i'}\sum_{a_i\in A_i}\left(\left(  \pi_i'(a_i|s) - \pi_i^{(t)}(a_i|s) \right) \tilde{Q}_i^{(t)}(s,a_i) \right) 
\bigg) 
\\
&\stackrel{(a)}{=} \frac{1}{1-\discount}  \sum_{s\in S} d_{\mu}^{\pi_i^\dagger,\pi_{-i}^{(t)}}(s)\Delta_i^{(t)}(s) 
\\&
\stackrel{(b)}{\leq} \frac{1}{1-\discount}  \sum_{s\in S} d_{\mu}^{\pi_i^\dagger,\pi_{-i}^{(t)}}(s)\Delta_{\imax}^{(t)}(\smax) {=} \frac{1}{1-\discount}  \left(\Delta_{\imax}^{(t)}(\smax)\right), 
\end{align*}
where
\((a)\) is by \eqref{eq:max_improv}, \((b)\) holds since \(\Delta_i^{(t)}(s)\leq \Delta_{\imax}^{(t)}(\smax)\) for all \(i\in \playerSet , s\in S\). To summarize,
\[
    {R}_i^{(t)}\leq \frac{1}{1-\discount} \lr{\Delta_{\imax}^{(t)}(\smax)+2\tau N\log(\bar{A})}.
\]
    Then  by Lemma \ref{lemma:technical} (1),
\begin{align}
    &\text{Nash-regret}(T)\leq \frac{1}{T(1-\discount)} \sum_{t\in [T]}\lr{\Delta_{\imax}^{(t)}(\smax)+2\tau N\log(\bar{A})} \notag\\ 
    &{\leq} \frac{2\tau N\log(\bar{A})}{(1-\discount)} + \frac{4C\sqrt{\bar{A}}(1+\tau N\log(\bar{A}))}{T(1-\discount)^2}
   \cdot\sum_{t\in [T]} \Big\|\pi_{\imax}^{(t+1)}(\smax) - \pi_{\imax}^{(t)}(\smax) \Big\|_2 \notag \\ 
    &{\leq} \frac{2\tau N\log(\bar{A})}{(1-\discount)} + \frac{4C\sqrt{\bar{A}}(1+\tau N\log(\bar{A}))}{\sqrt{T}(1-\discount)^2}
  \cdot\Big(\sum_{t\in [T]}\Big\| \pi_{\imax}^{(t+1)}(\smax) - \pi_{\imax}^{(t)}(\smax)\Big\|_2^2\Big)^\frac{1}{2},\label{eqn: bound sq part1}
\end{align}
where the last inequality follows from Cauchy-Schwarz inequality. 
For ease of exposition, define $
    D_1 \coloneqq \frac{8C\sqrt{\bar{A}}}{\sqrt{\bar{\mu}}(1-\discount)^2},  D_2 \coloneqq  \sqrt{\mnpg + \frac{C_{\Phi}}{T}}$, and  $D_3 \coloneqq  \sqrt{\frac{2\log(\bar{A})}{(1-\discount)}}.$
Then by  Lemma \ref{lemma:technical} (2),
\begin{align*}
     \eqref{eqn: bound sq part1}&{\leq} \frac{D_1(1+\tau N\log(\bar{A}))}{\sqrt{\tau}}\sqrt{D_2^2 + \frac{\tau N}{T} D_3^2} + \tau N D_3^2
    \\
    &{\leq} \frac{D_1(1+\tau N\log(\bar{A}))}{\sqrt{\tau}}\lr{D_2 + \sqrt{\frac{\tau N}{T} }D_3 }  + {\tau N}{D_3^2},
\end{align*}
where the last inequality follows from the fact that for any two positive scalars \(x,y\), \(\sqrt{x+y}\leq \sqrt{x}+\sqrt{y}\). 
Setting \(\tau\) as per \eqref{eq: TauDef} ensures that \(\tau < \sqrt{\tau}\) as \(\tau \leq 1\).
Thus, 
\begin{align*}
    &\text{Nash-regret}(T) \leq \frac{D_1D_2}{\sqrt{\tau}} + \frac{D_1D_3\sqrt{N}}{\sqrt{T}} +\sqrt{\tau}N\lr{ D_1D_2\log(\bar{A})  + D_1D_3\log(\bar{A})\sqrt{\frac{N}{T}} + D_3^2}.
\end{align*}
Plugging in the value of \(\tau\) as per \eqref{eq: TauDef} we obtain,
\begin{align*}
    \text{Nash-regret}(T) &\leq \sqrt{N}\Bigg(D_1^2D_2^2\log(\bar{A}) + D_1^2D_2D_3\log(\bar{A})\sqrt{\frac{N}{T}} + D_1D_2D_3^2\Bigg)^{\frac{1}{2}} + \frac{D_1D_3\sqrt{N}}{\sqrt{T}} \\ 
    &\leq D_1D_2\sqrt{N}\sqrt{\log(\bar{A})} + D_1\sqrt{D_2D_3\log(\bar{A})} \frac{N^\frac{3}{4}}{T^{\frac{1}{4}}} + \sqrt{D_1D_2}D_3\sqrt{{N}}+ \frac{D_1D_3\sqrt{N}}{\sqrt{T}}. 
    \end{align*}
    Note that \(D_3\geq 1\) and additionally, we assume that \(D_1 \geq 1\) (choose large enough  \(C\) that ensures this). Then,
    \begin{align*}
    \text{Nash-regret}(T) & \leq D_1D_2D_3\sqrt{N}\sqrt{\log(\bar{A})} + D_1D_3\sqrt{D_2\log(|\bar{A}|)} \frac{N^\frac{3}{4}}{T^{\frac{1}{4}}} + \sqrt{D_2}D_1D_3\sqrt{{N}} + \frac{D_1D_3\sqrt{N}}{\sqrt{T}} 
  \\
    &\leq D_1D_3\sqrt{N\log(\bar{A})} \lr{D_2 + \sqrt{D_2}\lr{1+\lr{\frac{N}{T}}^{\frac{1}{4}}} + \sqrt{\frac{1}{T}}} \\ 
    &\leq
    D_1D_3\sqrt{\log(\bar{A})} N^{\frac{3}{4}}\mathcal{O}(\max\{ D_2, \sqrt{D_2}\}).
    \end{align*}
    The proof is finished by plugging in $D_1,D_2$ and $D_3$.


\section{Numerical experiments}\label{sec: numerics}
This section studies the empirical performance of Algorithms \ref{alg: PolicyGradient} and \ref{alg: SequentialPureBRMaxPlayer}  for Markov congestion game (MCG) and perturbed Markov team game (PMTG) discussed in Section \ref{sec: NearMPG}.
{{}
Although Section \ref{sec: Algorithms} focuses on model-based algorithms, in our numerical study both Algorithm \ref{alg: PolicyGradient} and Algorithm \ref{alg: SequentialPureBRMaxPlayer} are implemented in a model-free manner, where the Q-functions are estimated from samples \cite{ding2022independent, leonardos2021global}.
}
Below are the details of the setup of the experiments.

\textbf{MCG:}
Consider MCG with $N  = 8$ players, where there are $|E| = 4$ facilities $A, B, C, D$ that each player can select from, i.e., $|A_i| = 4$. For each facility $j$, there is an associated state $s_j$: \textit{normal} ($s_j =0$) or \textit{congested} ($s_j = 1$) state, {and the state of the game is $s = (s_j)_{j \in E}$.}
The reward for each
player being at facility $k$ is equal to $w_k^{\text{safe}}$ times the number of players at $k=A, B, C, D.$
We set $w_A^{\text {safe }}=1<w_B^{\text {safe }}=2<w_C^{\text {safe }}=4<w_D^{\text {safe }}=6$, i.e., facility $D$ is most preferable by all players. However, if more than  $N /2$ players find themselves in the same facility,
then this facility transits to the \textit{congested} state, where the reward for each player is reduced by a large constant $c=-100$. To return to the \textit{normal} state, the facility should contain no more than $N /4$ players. 

\textbf{PMTG:} {{}
Consider a game where each player votes for approving or disapproving a project, which is only conducted if a majority of players vote for approval. The state of excitement about the project changes between different rounds depending on the number of players approving it. Mathematically, consider a game with \( N = 16 \) players, where there are two actions per player: \textit{approve} (\( a_i = 1 \)) or \textit{disapprove} (\( a_i = 0 \)). There can be two states of the project: \textit{high} (\( s = 1 \)) and \textit{low} (\( s = 0 \)) levels of excitement for the project. 

The individual reward of player \( i \) is given by
\(
u_i(s,a) = \mathbf{1}_{\sum_{i} a_i \geq N/2} + w_i \mathbf{1}_{\{a_i = s\}} - w_i' a_i,
\)
where the first term represents the common utility derived by everyone if the project is approved, the second term represents the utility derived by a player in approving a high-priority project or disapproving a low-priority project, and the third term corresponds to the cost of approving the project. Here, we set
\(
w_i = 10\kappa \cdot \frac{N + 1 - i}{N}\) and \(w_i' = \kappa \cdot \frac{i + 1}{N}. 
\) Here, parameter \(\kappa\) captures the magnitude of perturbation. 

The state transitions from the \textit{high excitement state} to itself with probability \(\lambda_1\) if more than \(N/4\) players approve it; otherwise, it transitions to itself with probability \( \lambda_2 \). In contrast, the state transitions from the \textit{low excitement state} to high with probability \( \lambda_3 \) if there are at least \( N/2 \) approvers; if there are \( N/2 \) or fewer approvers, it transitions to high with probability \( \lambda_4 \).

}


For both games, we perform episodic updates with \(20\) steps and a discount factor $\discount = 0.99$. We estimate the \(Q\)-functions and the utility functions using the average of mini-batches of
size $10$. 
{{}
For MCG, Figures \ref{fig:a} and \ref{fig:b} illustrate the average number of players taking particular action in different states at the converged values of policy. For example, in the state $(0,0,0,1)$ (denoted by the yellow label in Figure \ref{fig:a} and \ref{fig:b}), facility $D$ is congested, while the other facilities remain in a normal state. In this scenario, only $N/4 = 2$ players select facility $D$ to restore it to a normal state. Simultaneously, $N/2$ players choose facility $C$, which provides the second-highest reward after $D$. The number of players at $C$ is within 
 the congestion threshold ($N/2$), thus ensuring that it remains in a normal state.
}

{{}
For PMTG, we set \(\lambda_1=\lambda_3=1\), \(\lambda_2=\lambda_4=0\) and \(\kappa = 0.1\). Figures \ref{fig:a_pert} and \ref{fig:b_pert} illustrate the average number of players taking particular action in different states at the converged values of policy. For example, in the `high' state of excitement about project (denoted by the red label in Figure \ref{fig:a_pert} and \ref{fig:b_pert}), almost all players will select to approve as it will always remain in high state thereon. Meanwhile, if the state of excitement is `low', then at least half of the players select to approve it so that it transitions to `high' state in future. 
}

{{} Figures \ref{fig:c} and \ref{fig:c_pert} depict the \( L_1 \)-accuracy in the policy space at each iteration, defined as the average distance between the current policy and the final policy of all \( 8 \) players, i.e.,
\(
L_1\text{-accuracy} = \frac{1}{N} \sum_{i \in I} \| \pi_i - \pi_i^{(T)} \|_1.
\)
Figures \ref{fig:c} and \ref{fig:c_pert} show that Algorithm \ref{alg: PolicyGradient} converges faster for PMTG, while Algorithm \ref{alg: SequentialPureBRMaxPlayer} converges faster for MCG.
}
\begin{figure*}[!ht]
        \centering
    \begin{minipage}[l]{0.8\textwidth}
        \centering
        \begin{subfigure}{0.28\textwidth}
    \centering
    \includegraphics[width=\textwidth]{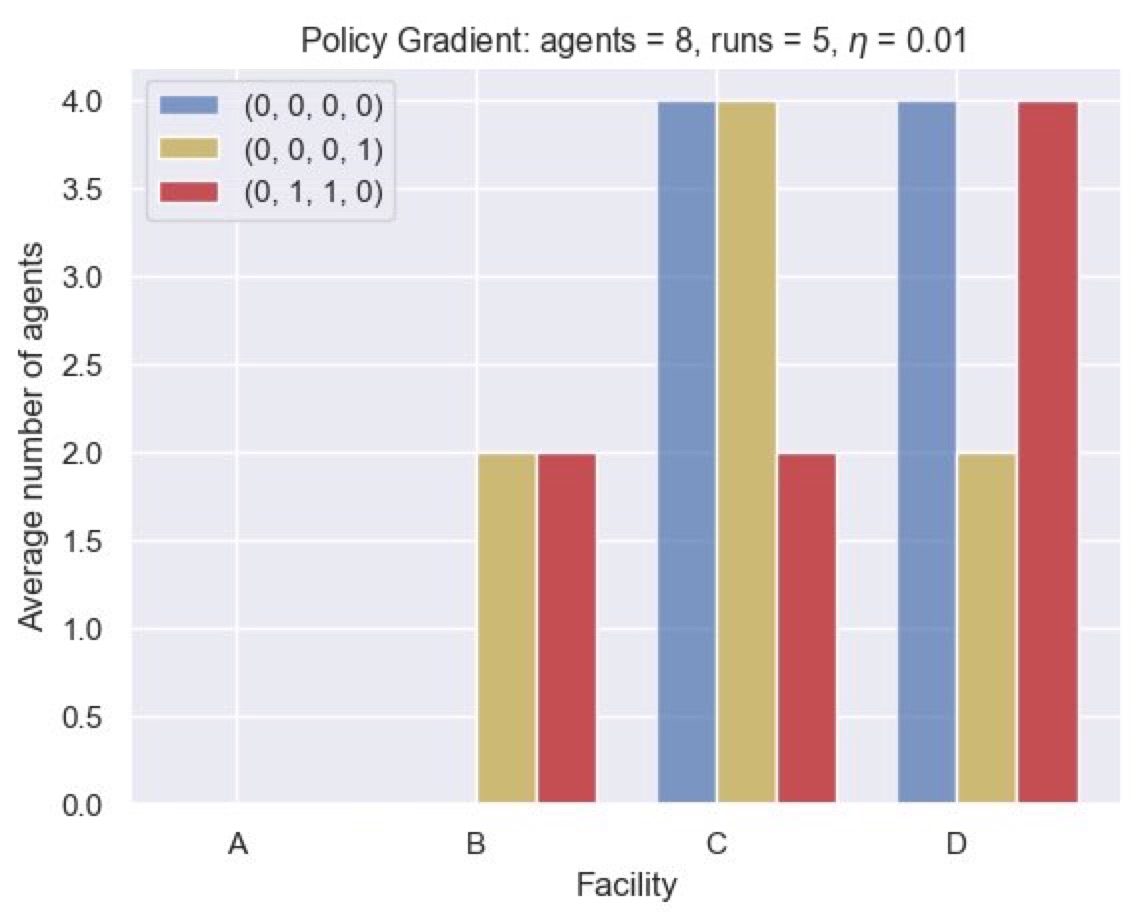}
    \caption{}
    \label{fig:a}
  \end{subfigure}
  \begin{subfigure}{0.28\textwidth}
    \centering
    \includegraphics[width=\textwidth]{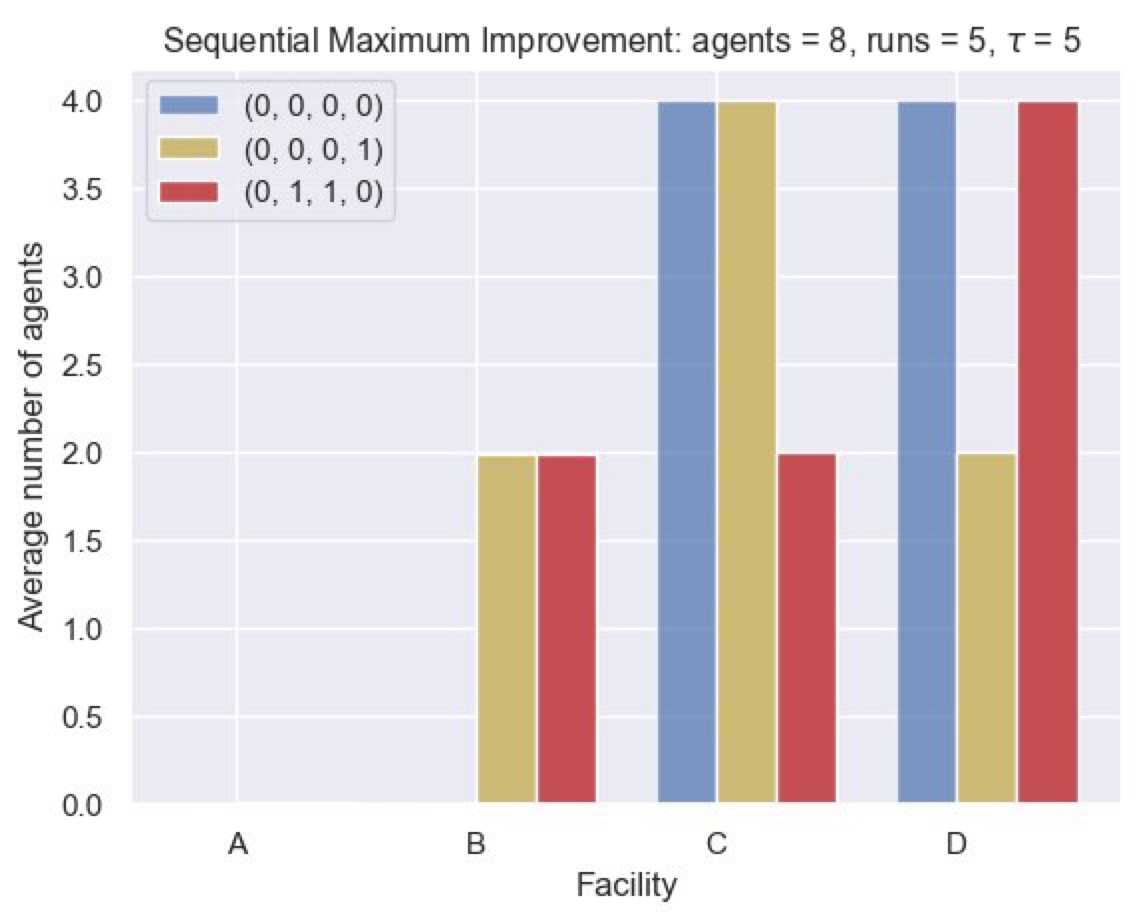}
    \caption{}
    \label{fig:b}
  \end{subfigure}
  \begin{subfigure}{0.3\textwidth}
    \centering
    \includegraphics[width=\textwidth]{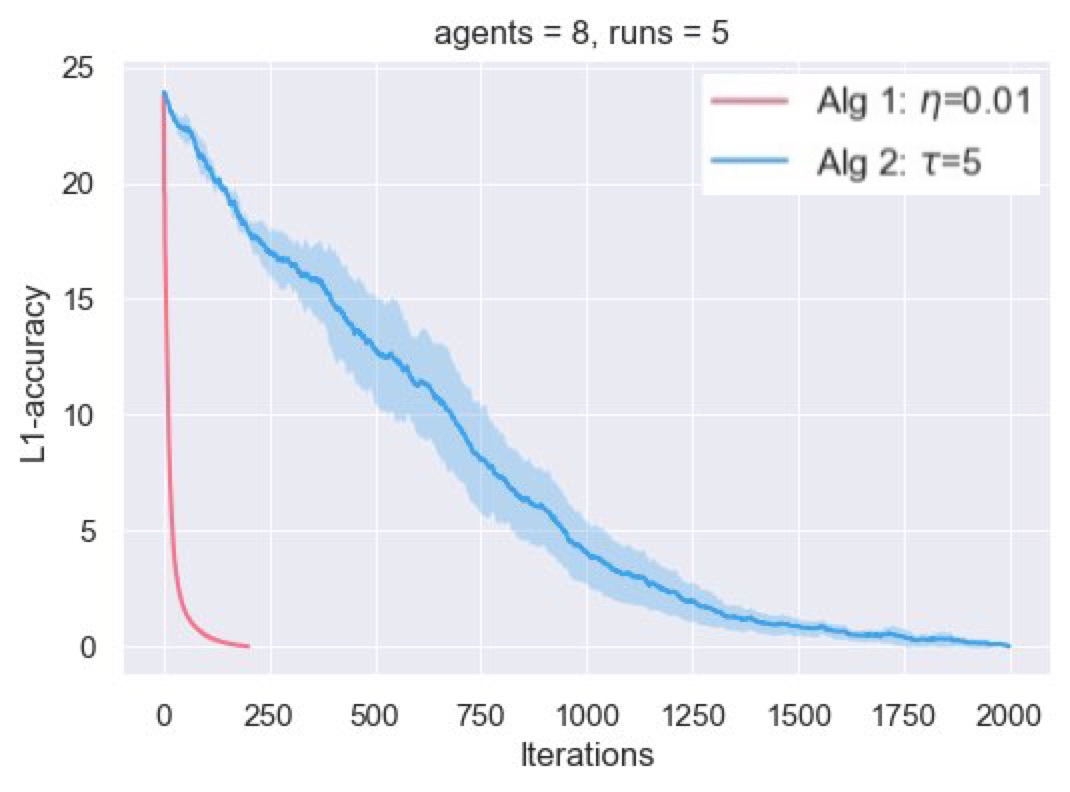}
    \caption{}
    \label{fig:c}
  \end{subfigure}
    \end{minipage}
      \caption{\textbf{Markov congestion game:} (a) and (b) are distributions of players taking four actions in representative states using $\pi^{(T)}$ given by (a) Algorithm \ref{alg: PolicyGradient} with step-size $\eta = 0.01$; (b) Algorithm \ref{alg: SequentialPureBRMaxPlayer} with regularizer $\tau_t = 0.999^t\cdot 5$. 
  (c) is mean L1-accuracy with shaded region of one standard deviation over all runs.} 
      \label{fig:cong}
\end{figure*}
{{}\begin{remark}
    We note that the regret bound proposed in our analysis can be loose. In Figure \ref{fig:regret_bound_loose}, we compare growth of regret bound obtained in our theoretical results with that obtained in experiments, where we observe significant gap between the two quantities. This suggests an interesting direction of future research to develop tighter regret bounds.
\end{remark}
}

\begin{figure*}[!ht]
\centering
    \begin{minipage}[l]{0.8\textwidth}
  \centering
  \begin{subfigure}{0.28\textwidth}
    \centering
    \includegraphics[width=\textwidth]{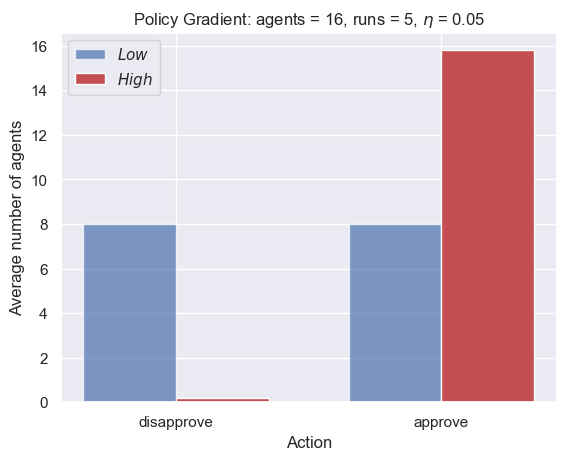}
    \caption{}
    \label{fig:a_pert}
  \end{subfigure}
  \begin{subfigure}{0.28\textwidth}
    \centering
    \includegraphics[width=\textwidth]{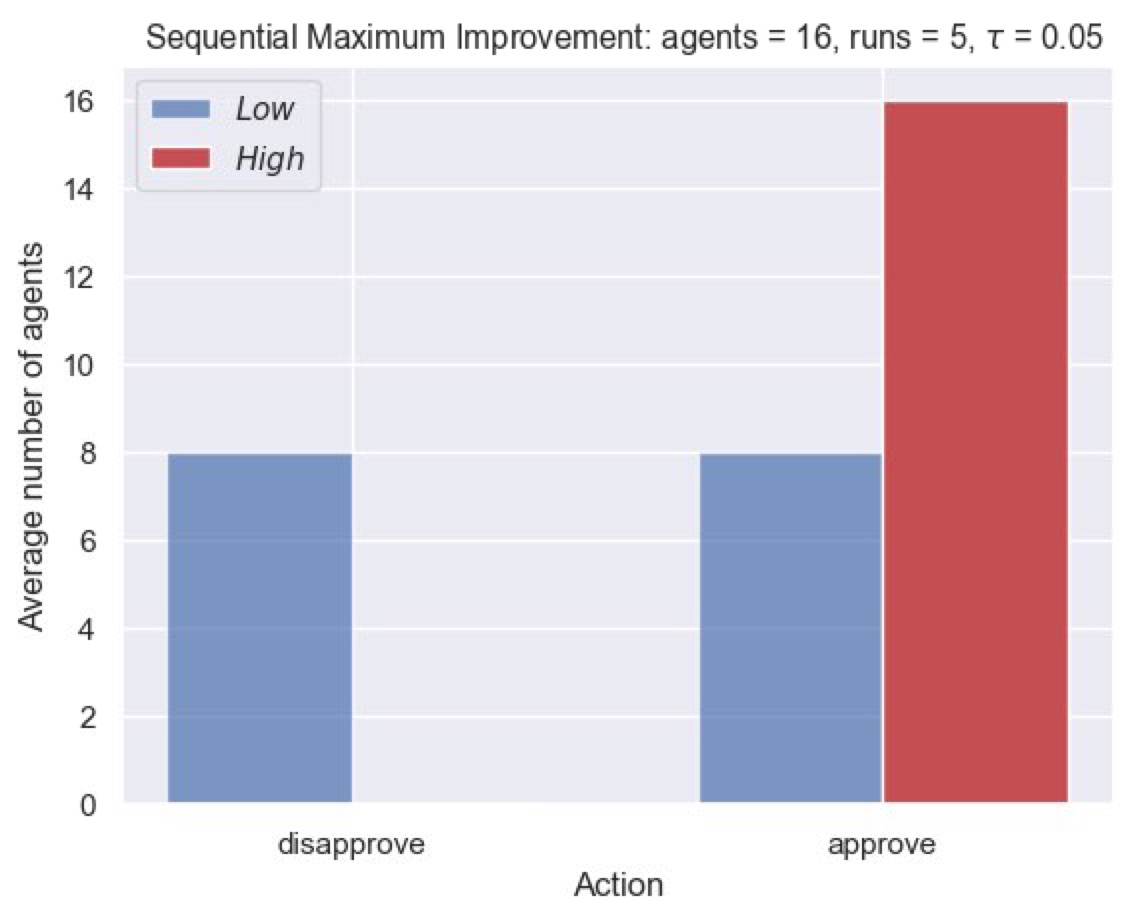}
    \caption{}
    \label{fig:b_pert}
  \end{subfigure}
  \begin{subfigure}{0.32\textwidth}
    \centering \includegraphics[width=\textwidth]{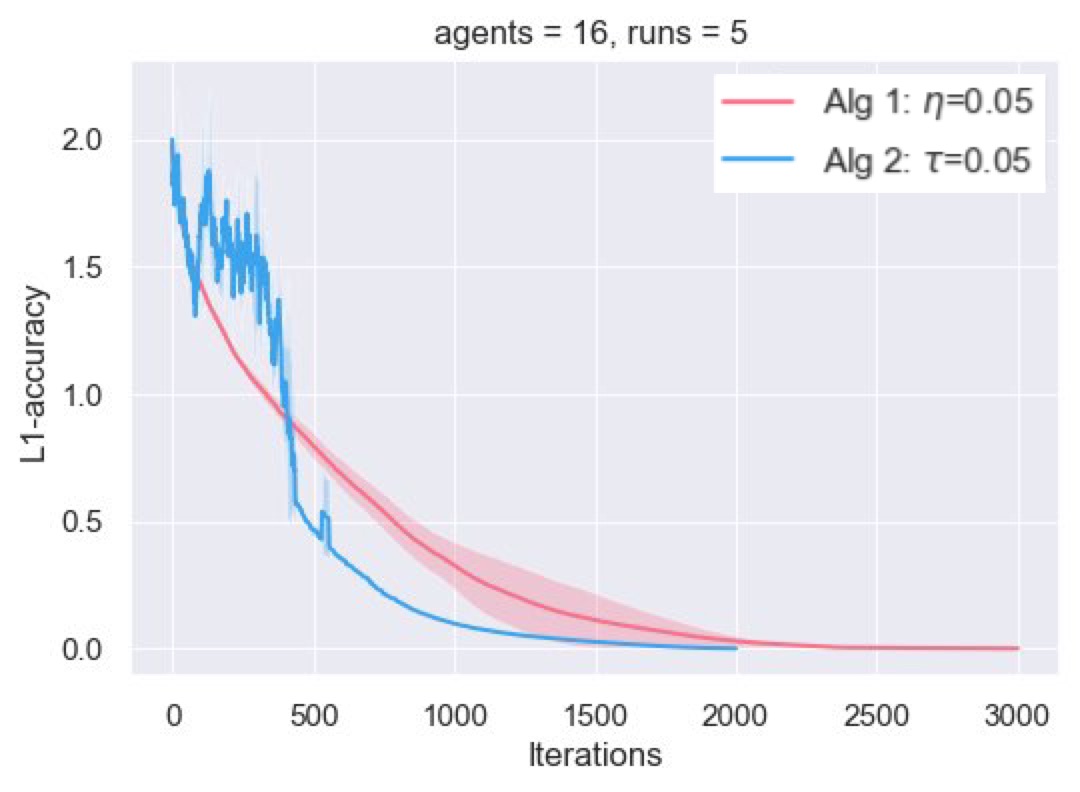}
    \caption{}
    \label{fig:c_pert}
  \end{subfigure}
        \end{minipage}
         \caption{\textbf{Perturbed Markov team game:} (a) and (b) are distributions of players taking actions in all states: (a) using Algorithm \ref{alg: PolicyGradient} with step-size $\eta = 0.05$; (b) using Algorithm \ref{alg: SequentialPureBRMaxPlayer} with regularizer $\tau_t =  0.9975^t \cdot  0.05 $. (c) is mean L1-accuracy with shaded region of one standard deviation over all runs.}
  \label{fig:pert}
\end{figure*}

\begin{figure}
    \centering
    \includegraphics[width=0.35\linewidth]{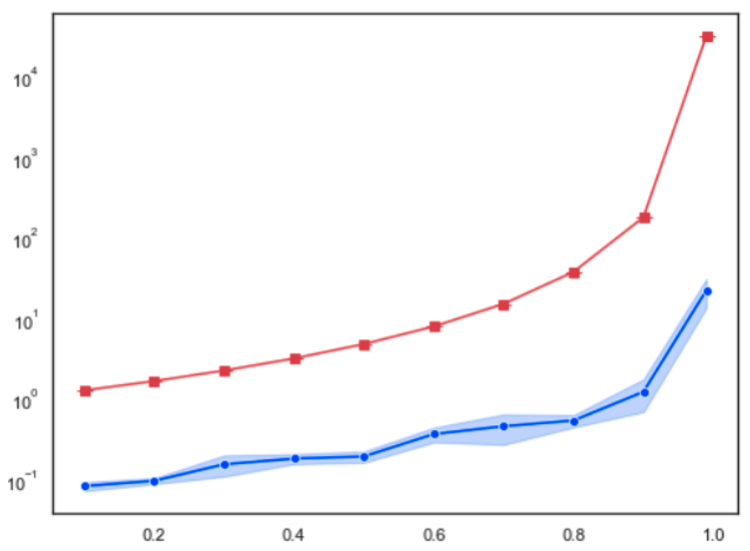}
    \caption{Variation of Nash regret with the discount factor for perturbed Markov team game with perturbation parameter \(\zeta = 0.1\). The red curve plots the function \(1/(1-\gamma)^{9/4}\) (as stated in Theorem 6.1) and the blue shaded region show the Nash regret computed through 10 rounds of experiments with random initialization. Note that the scale on y-axis is in log.}
    \label{fig:regret_bound_loose}
\end{figure}

\section{Conclusion}
We propose a new framework called Markov \(\mnpg\)-potential games to study Markov games, generalizing the framework of Markov potential games. We analytically compute upper bounds on \(\mnpg\) for Markov congestion games and perturbed Markov team games. We also present a semi-infinite linear programming approach to compute an upper bound on \(\mnpg\) for general discrete-time Markov games. This framework is used to design model-based MARL algorithms for  Markov games in discrete-time setting, {along with associated regret bounds.  However, a key limitation is that these regret bounds may not be tight, potentially restricting their applicability to Markov games with small values of $\alpha$.
An interesting direction for future research is therefore to develop sharp regret bounds. 
Another research direction is to design game structures by imposing specific rules or assumptions on the hyperparameters that lead to a theoretically small \(\alpha\), thereby improving tractability and performance guarantees. Additionally, one may develop (model-free) algorithms with tight upper and lower regret bounds.
It is worth mentioning some recent progress in \cite{guo2024alpha}, which has developed analytical tools to study general classes of continuous-time 
$\alpha$-potential games with arbitrary state and action spaces.  A natural next step is to investigate whether learning algorithms can be effectively applied to these more general settings.
}

\appendix
\section{Proofs in Section \ref{ssec: ExamplesNearMPG}}


\subsection{Proof of Proposition \ref{prop: exampleMPG}}
Let $\Phi$ be a potential function of MPG $\game$. 
Using Definition \ref{def: PotentialGame}, it suffices to show  $\Phi \in \mathcal{F}^\game$.
First,  we claim that for every \(s\in S, \pi,\pi'\in\Pi,\) 
\begin{align}\label{eqn:boundforPhi_minus_Phi}
    | \Phi(s, \pi) - \Phi(s, \pi') | \leq 
    \sum_{i=1}^N |V_i(s,\tilde{\pi}^{(i)})-V_i(s,\tilde{\pi}^{(i+1)})|,
\end{align}
where for any \(i\in \playerSet\), \(\tilde{\pi}^{(i)} = (\pi_1',\pi_2',..\pi_{i-1}',\pi_i,\pi_{i+1},...,\pi_N)\) with the understanding that \(\tilde{\pi}^{(1)} = \pi\) and \(\tilde{\pi}^{(N+1)} = \pi'\). 
To prove this claim, note that 
\begin{align*}
    | \Phi(s, \pi) - \Phi(s, \pi') |
    &= \bigg|\sum_{i=1}^{N} \Phi(s, \tilde{\pi}^{(i)}) - \Phi(s, \tilde{\pi}^{(i+1)}) \bigg| 
   \leq \sum_{i=1}^{N}\big| V_i(s, \tilde{\pi}^{(i)}) - V_i(s, \tilde{\pi}^{(i+1)}) \big|,
\end{align*}
which follows from Definition \ref{def: PotentialGame} as \(\tilde{\pi}^{(i)}\) and \(\tilde{\pi}^{(i+1)}\) only differ at player \(i\)'s policy. 
By \eqref{eqn:boundforPhi_minus_Phi}, for any $s\in S$, $\pi,\pi' \in \Pi$,
$$
|\Phi(s,\pi) -\Phi(s,\pi') | \leq 2N \max_{i\in\playerSet} \| V_i\|_{\infty} \leq \frac{2N}{1-\gamma}\max_{i\in\playerSet}\| u_i\|_{\infty}. 
$$
Without loss of generality, we have $\min_{\pi\in\Pi}\Phi(s,\pi) = 0$ for every \(s\in S\). Therefore,
$ \|\Phi \|_{\infty} \leq \frac{2N}{1-\gamma}\max_{i\in\playerSet}\| u_i\|_{\infty}.$

To show that \(\Phi\) lies in a uniformly equi-continuous set \(\mathcal{F}_{\mathcal{G}}\), we next show that \(\Phi\) is uniformly continuous. 
Note that for each \( s \in S \) and \( i \in I_N \), \( V_i(s, \cdot) : \Pi \rightarrow \mathbb{R} \) is a continuous function \cite[Lemma 2.10]{yongacoglu2023satisficing}.  Given that \( \Pi \) is compact and \(|S|<\infty\), for every \(\epsilon>0\) there exists \(\bar{\delta}(\epsilon)>0\) such that \(\underset{i\in I_N, s\in S}{\max}
|V_{i}(s, \pi) - V_i(s,\pi') | \leq {\epsilon }/N\) for any $\pi,\pi'\in \Pi$ satisfying
$\md(\pi, \pi') \leq \bar{\delta}(\epsilon)$. Consequently, from \eqref{eqn:boundforPhi_minus_Phi}, we conclude that for any \(\epsilon>0\), \(| \Phi(s, \pi) - \Phi(s, \pi') | \leq \epsilon\) for any  $\pi,\pi'\in \Pi$ satisfying
$\md(\pi, \pi') \leq  \bar{\delta}(\epsilon).$

\subsection{Proof of Proposition \ref{prop:mcg}}
The proof of Proposition \ref{prop:mcg} relies on the following lemma. 
\begin{lemma}\label{lem: CongestionGameTransitionDifference}
If there exists some $\zeta>0$ such that  for all \(s,s'\in S\), \( |P(s'|s,w)-P(s'|s,w')| \leq \zeta  \|w-w'\|_{1}\). Then for any \(i\in \playerSet, \pi_i,\pi_i'\in \Pi_i, \pi_{-i}\in \Pi_{-i}\),
   \begin{align}\label{eq: P_diff_inf_norm}
    \|P^{\pi_i,\pi_{-i}}-P^{\pi_i',\pi_{-i}}\|_\infty \leq {2\zeta}{ |S|\max_{a_i\in A_i}|a_i|}/{N}.
   \end{align}
\end{lemma}

\begin{proof}
For any \(i \in \playerSet\), \(\pi \in \Pi\), \(\pi_i' \in \Pi_i\), and \(s, s' \in S\),
\begin{align}
    &P^{\pi_i,\pi_{-i}}(s'|s) - P^{\pi_i',\pi_{-i}}(s'|s) \notag \\
    &= \mathbb{E}_{\substack{a_{-i} \sim \pi_{-i} \\ a_i \sim \pi_i}} \Big[ P(s'|s, w(a_i, a_{-i})) - P(s'|s, w(a_i, a_{-i})) \Big] \notag \\
    &\leq \mathbb{E}_{a_{-i} \sim \pi_{-i}} \Big[ P(s'|s, w(\bar{a}_i, a_{-i})) - P(s'|s, w(\underline{a}_i, a_{-i})) \Big], \label{eq: InitialTransitionDiff}
\end{align}
where the first equation is due to the structure of transition function, \(\bar{a}_i \in {\arg \max}_{a_i \in A_i} P(s'|s, w(a_i, a_{-i}))\), and \(\underline{a}_i \in {\arg \min}_{a_i \in A_i} P(s'|s, w(a_i, a_{-i}))\).
By \eqref{eq: InitialTransitionDiff} and the Lipschitz property of the transition matrix in Lemma \ref{lem: CongestionGameTransitionDifference},
\begin{align*}
    &\sum_{s'\in S}|P^{\pi_{i},\pi_{-i}}(s'|s) - P^{\pi_{i}',\pi_{-i}}(s'|s)| \\ 
    &\stackrel{(a)}{\leq}\frac{\zeta|S|}{ N }    \underset{a_{-i} \sim \pi_{-i}}{\avg} \left[\sum_{e\in E}|\mathbbm{1}(e\in \bar{a}_i)-\mathbbm{1}(e\in\underline{a}_i)| \right]\\
    &\leq \frac{2\zeta |S|\max_{a_i\in A_i}|a_i|}{ N }, \quad \forall \ s\in S,
\end{align*}
where \((a)\) follows by   \eqref{eq:w}. 
\end{proof}

\paragraph{Proof of Proposition \ref{prop:mcg}.}
    Recall that for any \(s\in S\), the stage game is a potential game with a potential function \(\varphi(s, a) = 1/ N \sum_{e\in E}\sum_{j=1}^{w_e(a)  N }c_e\left(s, j/ N \right)\).
Under this notation, we can equivalently write \eqref{eq:potential} as
\begin{align}\label{eq: potentialBellmanForm}
    \Psi(s,\pi)  
    &= \varphi(s,\pi) + \discount \sum_{s'\in S}P^\pi(s'|s)\Psi(s',\pi).
\end{align}

For the rest of the proof, fix arbitrary \(\pi_i,\pi_i'\in \Pi_i, \pi_{-i}\in \Pi_{-i}\) and denote \(\pi=(\pi_i,\pi_{-i}),\pi'=(\pi_i',\pi_{-i})\). By \eqref{eq: potentialBellmanForm}, 
\begin{align}\label{eq: PotentialDiff}
     \Psi(s,\pi)-\Psi(s,\pi') =\varphi(s,\pi) - \varphi(s,\pi')  + \discount\sum_{s'\in S}\lr{P^\pi(s'|s)\Psi(s',\pi) - P^{\pi'}(s'|s)\Psi(s',\pi')}. 
\end{align}
Additionally, recall that \( V_i(s,\pi) = u_i(s,\pi) + \discount \sum_{s'\in S}P^\pi(s'|s) V_{i}(s',\pi).\)
Consequently, 
\begin{align}\label{eq: ValueDiff}
    &V_i(s,\pi)- V_i(s,\pi') = u_i(s,\pi)- u_i(s,\pi')\\
    &\quad+ \discount \sum_{s'\in S}\left(P^\pi(s'|s) V_{i}(s',\pi)-P^{\pi'}(s'|s) V_{i}(s',\pi')\right).  \notag
\end{align}
Subtracting \eqref{eq: PotentialDiff} from \eqref{eq: ValueDiff}, we obtain
\begin{align*}
    V_i(s,\pi)&- V_i(s,\pi') - \lr{\Psi(s,\pi)-\Psi(s,\pi') } \\
    {=}  & \discount \sum_{s'\in S}P^\pi(s'|s) \lr{V_{i}(s',\pi)-\Psi(s',\pi)} -\gamma\sum_{s'\in S}P^{\pi'}(s'|s) \lr{V_{i}(s',\pi')-\Psi(s',\pi')}  \\ 
    {=}& \discount \sum_{s'\in S}P^\pi(s'|s) \lr{V_{i}(s',\pi)- V_i(s',\pi') +  \Psi(s',\pi') - \Psi(s',\pi) } \\ 
    &\quad -\gamma \sum_{s'\in S}\lr{P^{\pi'}(s'|s)-P^\pi(s'|s)} \lr{V_{i}(s',\pi')-\Psi(s',\pi')}.
\end{align*}
Thus, 
\begin{align}
    &\max_{s\in S }|V_i(s,\pi)- V_i(s,\pi') -(\Psi(s,\pi)-\Psi(s,\pi')) | \label{eqn: max Vdiff}  \\ 
   &\leq \discount\max_{s\in S }|V_i(s,\pi)- V_i(s,\pi') -\lr{\Psi(s,\pi)-\Psi(s,\pi')} | \notag \\ 
    &+ \discount \max_{s'\in S}|\Psi(s',\pi')-V_{i}(s',\pi')| \max_{s\in S}\sum_{s'\in S}\left|P^\pi(s'|s)-P^{\pi'}(s'|s)\right|. \notag
\end{align}
Rearranging terms leads to
\begin{align}\label{eq: MCGPrePlugPDiff}
    \eqref{eqn: max Vdiff} 
    &\leq  \frac{\discount}{1-\discount}\max_{s'\in S}|\Psi(s',\pi')-V_{i}(s',\pi')|\|P^{\pi}-P^{\pi'}\|_{\infty}
    \notag \\
    &\leq  \frac{2\discount\zeta |S|\max_{a_i\in A_i}|a_i|}{(1-\discount) N }\max_{s'\in S}|\Psi(s',\pi')-V_{i}(s',\pi')|.
\end{align}
where the last inequality follows from Lemma \ref{lem: CongestionGameTransitionDifference}.
Finally, since
$$
\begin{aligned}
    u_i(s^k,a^k) &={\sum_{e\in E}c_{e}(s^k,w_e^k) \mathbbm{1}(e \in a_i^k)} 
    \leq  \sum_{e\in E}c_{e}(s^k,w_e^k)\leq \varphi(s^k,a^k),
\end{aligned}
    $$ 
    then for any \(s'\in S\),
\begin{align*}
   |\Psi(s',\pi')-V_{i}(s',\pi')| &\leq   \avg_{\pi'}\left[\sum_{k=0}^{\infty}\discount^k \left|\varphi(s^k,a^k) -  u_i(s^k,a^k)\right| \right] \\
   &{\leq} \left|\avg_{\pi'}\left[\sum_{k=0}^{\infty}\discount^k \varphi(s^k,a^k)\right]\right| \leq \sup_{s,\pi}\Psi(s,\pi).
\end{align*}
Plugging the above inequality into \eqref{eq: MCGPrePlugPDiff} finishes the proof.
\endproof

\subsection{Proof of Proposition \ref{prop: PerturbedMarkovTeamGame}}
Throughout the proof, let us fix arbitrary \(i\in \playerSet, \pi_i,\pi_i'\in \Pi_i, \pi_{-i}\in \Pi_{-i}\), and define \(\pi=(\pi_i,\pi_{-i}),\pi'=(\pi_i',\pi_{-i})\). 
We show that for every \(i\in \playerSet, \pi_i, \pi_i'\in \Pi_i, \pi_{-i}\in \Pi_{-i}\),  
$$
\begin{aligned}
     \underset{{s\in S }}{\max}|&V_i(s,\pi)- V_i(s,\pi') -(\Psi(s,\pi)-\Psi(s,\pi')) | \leq \frac{2\kappa}{(1-\discount)^2},
\end{aligned}
   $$
where \(\Psi(s, \pi)\coloneqq \avg_\pi\left[\sum_{k=0}^{\infty}\discount^k r(s^k, a^k)|s^0=s\right]\). Note that 
\begin{align}\label{eq: DefPhiMPTG}
\Psi(s, \pi)
&= r(s,\pi)+\discount \sum_{s'\in S}P^\pi(s'|s)\Psi(s',\pi).
\end{align}
By \eqref{eq: DefPhiMPTG}, for any \(s\in S\), 
\begin{equation}
    \begin{aligned}\label{eq: PotentialDiffPTMG}
    &\Psi(s,\pi)-\Psi(s,\pi') = r(s,\pi) - r(s,\pi')+ \discount\sum_{s'\in S}\lr{P^\pi(s'|s)\Psi(s',\pi) - P^{\pi'}(s'|s)\Psi(s',\pi')  }.
    \end{aligned}
\end{equation}
Similarly, for any \(s\in S\), 
\begin{equation}
    \begin{aligned}\label{eq: ValueDiffPTMG}
    &V_i(s,\pi)- V_i(s,\pi')= u_i(s,\pi)- u_i(s,\pi') + \discount \sum_{s'\in S}P^\pi(s'|s) V_{i}(s',\pi)-P^{\pi'}(s'|s) V_{i}(s',\pi'). 
\end{aligned}
\end{equation}
Consequently, 
\begin{align*}
    &V_i(s,\pi)- V_i(s,\pi') - \lr{\Psi(s,\pi)-\Psi(s,\pi') } \\ 
    &{=} u_i(s,\pi)- u_i(s,\pi') - \lr{r(s,\pi) - r(s,\pi') } \\
    &-\discount\sum_{s'\in S}\lr{P^{\pi'}(s'|s)-P^\pi(s'|s)} \lr{V_{i}(s',\pi')-\Psi(s',\pi')} 
   \\
    &+ \discount \sum_{s'\in S}P^\pi(s'|s) \lr{V_{i}(s',\pi)- V_i(s',\pi') +  \Psi(s',\pi') - \Psi(s',\pi) }.
\end{align*}
Since $|u_i(s,\pi)- u_i(s,\pi') - (r(s,\pi) - r(s,\pi') )| \leq 2 \|\xi_i\|_{\infty} \leq 2\kappa,$ then
\begin{align}\label{eq: pMTGVDiff}
    &\max_{s\in S }|V_i(s,\pi)- V_i(s,\pi') -\lr{\Psi(s,\pi)-\Psi(s,\pi')} |   \\ 
   &{\leq}  2\kappa  + 2\discount\max_{s'\in S}|\Psi(s',\pi')-V_{i}(s',\pi')| \notag
    \\ 
    &\quad + \discount\max_{s\in S }|V_i(s,\pi)- V_i(s,\pi') -\lr{\Psi(s,\pi)-\Psi(s,\pi')} |.\notag
\end{align}
Rearranging terms in above inequality, we obtain 
\begin{align}\label{eq: Final1}
    &\eqref{eq: pMTGVDiff} \leq  
    \frac{2\kappa}{1-\discount} + \frac{2\discount}{1-\discount}\max_{s'\in S}|\Psi(s',\pi')-V_{i}(s',\pi')|. 
\end{align}
Note that \(
    |\Psi(s',\pi')-V_{i}(s',\pi')| = |\sum_{k=0}^{\infty}\discount^k\xi_i(s,\pi'(s^k))| \leq {\kappa}/({1-\discount}). 
\)
Plugging this inequality into \eqref{eq: Final1} completes the proof.
\endproof

\section{Proofs in Section \ref{ssec: FindClosestMPG}}
\subsection{Proof of Proposition \ref{lemma: Phi_constructed_MDP}}
To prove Proposition \ref{lemma: Phi_constructed_MDP}, we first need the following lemma.
\begin{lemma}
[Lemma B.1 in \cite{yongacoglu2023satisficing}]
\label{lemma: trans_prob_continuous}
    Fix $i \in \playerSet$ and $K \in \mathbb{N}$. For any $s \in S$ and $\omega = \left(\tilde{s}^k, \tilde{{a}}^k\right)_{k=0}^K \in$ $(S \times A)^{K+1}$, the mapping \(
\Pi\ni \pi\mapsto \mathbb{E}_{\pi}\left[ \mathbbm{1}\left((s^k, a^k)_{k=0}^{K} = \omega\right) \mid s^0=s\right] 
\)
is continuous.
\end{lemma}

\paragraph{Proof of Proposition \ref{lemma: Phi_constructed_MDP}.}
Fix $\epsilon > 0$ and define 
 $ M \coloneqq N \max_{i\in\playerSet} \| u_i\|_{\infty}$. Choose $K \in \mathbb{N}$ large enough that
$
\frac{\discount^K \cdot M}{1-\discount}  <\frac{\epsilon}{4}
$
and $\tilde\epsilon \coloneqq \frac{(1-\gamma)\epsilon}{2M|S|^{K+1}|A|^{K+1}}.$
Since $\Pi$ is compact and $S\times A$ is finite, Lemma \ref{lemma: trans_prob_continuous} ensures that
there exists ${\delta}(\epsilon)$ such that 
for any $\pi,\pi'\in\Pi$ 
with $\md(\pi,\pi') \leq {\delta}(\epsilon)$, and $\omega \in (S\times A)^{K+1}, s \in S,$
\begin{align}\label{eqn: P_minus_P}
    \bigg| \mathbb{E}_{\pi}&\left[ \mathbbm{1}\left((s^k, a^k)_{k=0}^{K} = \omega\right) \mid s^0=s\right]   - \mathbb{E}_{\pi'}\left[ \mathbbm{1}\left((s^k, a^k)_{k=0}^{K} = \omega\right) \mid s^0=s\right]  \bigg| \leq \tilde\epsilon.
\end{align}
From \eqref{eqn: F tilde}, we note that for any   $\Psi  \in \Tilde{\mathcal{F}}^\game $, there exists \(\phi:S\times A\ra \mathbb{R}\) such that for any $\pi, \pi'\in \Pi, s\in S, $
    \begin{align}
 &\left|\Psi(s,{\pi})-\Psi(s,{\pi'})\right| \leq  \Bigg|\mathbb{E}_{{\pi}}\left[\sum_{k=0}^K\gamma^k \phi \left(s^k, a^k\right) \mid s_0=s\right]  -\mathbb{E}_{{{\pi'}}}\left[\sum_{k=0}^K\gamma^k \phi \left(s^k, a^k\right) \mid s_0=s\right]\Bigg|+\frac{\epsilon}{2} . \label{eqn:first_eqn_psi_minus_psi}
\end{align}
Define a function $\varphi:(S \times A)^{K+1} \rightarrow \mathbb{R}$ such that for every $\left(\tilde{s}^k, \tilde{{a}}^k\right)_{k=0}^K \in(S \times A)^{K+1}$,
$
\varphi\left(\tilde{s}^0, \tilde{{a}}^0, \cdots, \tilde{s}^K, \tilde{{a}}^K\right):=\sum_{k=0}^K\gamma^k \phi \left(\tilde{s}^k, \tilde{{a}}^k\right).
$
Thus, for any ${\pi} \in \Pi$,
\begin{equation*}
\begin{aligned}
        &\mathbb{E}_{{\pi}}\left[\sum_{k=0}^K\gamma^k \phi\left(s^k, a^k\right) \mid s^0=s\right]=\sum_{\omega \in( S  \times   A )^{K+1}} \varphi(\omega) \mathbb{E}_{\pi}\left[\mathbbm{1}\left(\left(s^k,   a^k\right)_{t=0}^K=\omega \right)\bigg| s^0=s\right].
 \end{aligned}
\end{equation*}
 Thus, by applying the above equation and \eqref{eqn: P_minus_P} to \eqref{eqn:first_eqn_psi_minus_psi}, we obtain that for any $s\in S, \pi,\pi'\in \Pi$ satisfying $\md(\pi,\pi')\leq \delta(\epsilon)$,
\begin{align*}
    \left|\Psi(s,{\pi})-\Psi(s,{\pi'})\right|   &\leq {\| \varphi\|_{\infty} |S|^{K+1} |A|^{K+1} \tilde\epsilon } + \frac{\epsilon}{2}\\
    &\leq \frac{M |S|^{K+1} |A|^{K+1} \tilde\epsilon}{1-\gamma} + \frac{\epsilon}{2} \leq \epsilon.
\end{align*}
Since we chose arbitrary $\Psi\in \Tilde{\mathcal{F}}^\game $, and $\delta$ is independent of the choice of $\Psi$, then
$\Tilde{\mathcal{F}}^\game $ is equi-continuous. 
Thus,  $\tilde{\mathcal{F}}^\game\subseteq \mathcal{F^\game}.$

\section{Proofs in Section \ref{sssec: AlgPG}}

\subsection{Proof of Lemma \ref{lemma: policy improve}}
    
To prove Lemma \ref{lemma: policy improve}, we define $\pi_{i \sim j} \coloneqq \left\{\pi_k\right\}_{k=i+1}^{j-1}$ as the joint policy for players from $i+1$ to $j-1$; $\pi_{<i} \coloneqq \{ \pi_k\}_{k=1}^{i-1}$, and $\pi_{>j} \coloneqq  \{ \pi_k\}_{k=j+1}^{N}$ are defined similarly. Next, we recall a useful result from \cite{ding2022independent}.
\begin{lemma}[Lemma 2 in \cite{ding2022independent}]\label{lemma: multi func diff}
    For any function $f: \Pi \rightarrow \mathbb{R}$, and any two policies $\pi, \pi^{\prime} \in \Pi$,
    \begin{align}\label{eqn: aux f}
    f({\pi^{\prime}})-f(\pi)&= \sum_{i=1}^{ N }\left(f({\pi_i^{\prime}, \pi_{-i}})-f(\pi)\right)\notag \\
    & \quad +\sum_{i=1}^{ N } \sum_{j=i+1}^{ N }\Big(f({\pi_{<i, i \sim j}, \pi_{>j}^{\prime}, \pi_i^{\prime}, \pi_j^{\prime}})-f({\pi_{<i, i \sim j}, \pi_{>j}^{\prime}, \pi_i,    \pi_j^{\prime}})\notag \\
    &\quad
    -f({\pi_{<i, i \sim j}, \pi_{>j}^{\prime}, \pi_i^{\prime}, \pi_j})
    +f({\pi_{<i, i \sim j}, \pi_{>j}^{\prime}, \pi_i, \pi_j})\Big).
    \end{align}
\end{lemma} 
Next, we state a result that lower bounds the improvement in value function of each player in each step of Algorithm \ref{alg: PolicyGradient}.
{\begin{lemma}\label{lemma: value policy improve} Consider a Markov game \(\mathcal{G}\) with initial state distribution $\nu$, let $\pi^{(t+1)}$ and $\pi^{(t)}$ be consecutive policies in Algorithm \ref{alg: PolicyGradient}. Then we have,
\begin{equation*}
    \begin{aligned}
     \text{(i)}\,& V_i(\nu, {\pi^{(t+1)}})-V_i(\nu, {\pi^{(t)}}) \geq -\frac{4 \eta^2 \bar{A}^2N^2}{(1-\discount)^5} +\frac{1}{2 \eta(1-\discount)}  \cdot \sum_{i\in \playerSet ,s\in S} d_\nu^{\pi_i^{(t+1)}, \pi_{-i}^{(t)}}(s)\left\|\pi_i^{(t+1)}(s) -\pi_i^{(t)}(s)\right\|^2;
 \end{aligned}
\end{equation*}
\begin{equation*}
    \begin{aligned}
      \text{(ii)}\, &V_i(\nu, {\pi^{(t+1)}})-V_i(\nu, {\pi^{(t)}}) \geq \frac{1}{2 \eta(1-\discount)} \left(1-\frac{4 \eta \kappa_\nu^3 \bar{A} N}{(1-\discount)^4}\right) 
      \cdot \sum_{i=1}^{N} \sum_{s\in S} d_\nu^{\pi_i^{(t+1)}, \pi_{-i}^{(t)}}(s) \left\|\pi_i^{(t+1)}(s)-\pi_i^{(t)}(s)\right\|^2.
 \end{aligned}
 \end{equation*}
\end{lemma}
\begin{proof}
This result directly follows from \cite[Lemma 3]{ding2022independent}. Specifically, the proof of \cite[Lemma 3]{ding2022independent} is established by lower-bounding the difference $\Phi(\nu, \pi^{(t+1)}) - \Phi(\nu, \pi^{(t)})$ for a Markov potential game with potential function $\Phi$. At its core, the proof relies on the key property of Markov potential games, which allows the difference in potential functions to be expressed as the difference in value functions for each player. The remainder of the proof focuses on lower-bounding the difference in value functions at each step of the policy update process in Algorithm \ref{alg: PolicyGradient}, which is precisely what we require. We omit details due to space constraints. 
\end{proof}
}
\paragraph{Proof of Lemma \ref{lemma: policy improve}.}
For ease of exposition, let $\pi^{\prime}=\pi^{(t+1)}$ and $\pi=\pi^{(t)}$. 
By Definition \ref{def: alpha_param}, $|V_i(\nu,\pi_i',\pi_{-i}) - V_i(\nu,\pi_i,\pi_{-i}) - (\Phi(\nu,\pi_i',\pi_{-i}) -\Phi(\nu,\pi_i,\pi_{-i})) | \leq \mnpg$ for any $\nu, i\in \playerSet, \pi_i, \pi_i' \in \Pi_i$ and $\pi_{-i} \in \Pi_{-i}$.
Apply Lemma \ref{lemma: multi func diff} with $f(\cdot)=V_i( \nu, \cdot) - \Phi( \nu, \cdot )$ respectively. Since each term in \eqref{eqn: aux f} only differs in one player's policy, 
we obtain
{$$
\begin{aligned}
    &|V_i (\nu,\pi') - V_i(\nu, \pi) - (\Phi(\nu,\pi')-\Phi(\nu,\pi'))| \leq \sum_{i=1}^N \alpha + \sum_{i=1}^N \sum_{j=i+1}^N \alpha \leq N^2 \alpha.
\end{aligned}
$$}
The proof follows by the above inequality and Lemma \ref{lemma: value policy improve}.

\section*{Proofs in Section \ref{sssec: AlgBR}}
  
\subsection{Proof of Lemma \ref{lem: performanceDiffPG}}
Fix arbitrary \(i\in \playerSet, \mu\in \mathcal{P}(S), \pi_i, \pi_i'\in \Pi_i, \pi_{-i}\in \Pi_{-i}\). We define \(\policy=(\policy_i,\policy_{-i}), \policy'=(\policy_i',\policy_{-i})\in \Pi\). Note that 

  \begin{align}
    &\tilde{V}_i(\mu,\policy)-\tilde{V}_i(\mu,\policy') =\avg_{\pi}\bigg[\sum_{k=0}^{\infty}\discount^k \bigg( \stagePayoff_i(s^k,a^k)-\tau \sum_{j\in \playerSet}\nu_j(s^k,\policy_j) - \tilde{V}_i(s^k,\policy')+\tilde{V}_i(s^k,\policy') \bigg) \bigg] -\tilde{V}_i(\mu,\policy')
     \notag\\ 
    &=\avg_{\pi}\bigg[ \sum_{k=0}^{\infty}\discount^k \bigg( \stagePayoff_i(s^k,a^k) -\tau \sum_{j\in \playerSet}\nu_j(s^k,\policy_j) - \tilde{V}_i(s^k,\policy')  \bigg) \bigg] + \avg_\pi \ls{\sum_{k=1}^{\infty}\gamma ^{k}\tilde{V}_i(s^k,\policy')}.\label{eqn:tmp_smooth_perf_diff}
    \end{align}
     Note that 
$$ \avg_{\pi}\big[\sum_{k=1}^{\infty}\gamma ^{k}\tilde{V}_i(s^k,\policy')\big] = 
\gamma\avg_{\pi}\ls{\sum_{k=0}^{\infty} \gamma ^{k}\tilde{V}_i(s^{k+1},\policy') },$$
thus,
    \begin{align}
\eqref{eqn:tmp_smooth_perf_diff}
&=\mathbb{E}_\pi\Big[\sum_{ k = 0 } ^ { \infty } \gamma^{ k } \big(u_i\left(s^k, a^k\right) -\tau \sum_{j \in I_N} \nu_j\left(s^k, \pi_j\right) -\tilde{V}_i\left(s^k, \pi^{\prime}\right)+\gamma \tilde{V}_i\left(s^{k+1}, \pi^{\prime}\right)\big)\Big] \notag\\
&=\avg_{\pi}\bigg[\sum_{k=0}^{\infty}\discount^k \bigg(  \stagePayoff_i(s^k,a^k)-\tau\sum_{j\in \playerSet}\nu_j(s^k,\policy_j') { +\discount \sum_{s'\in S}P(s'|s^k,a^k)\tilde{V}_i(s',\policy')}- \tilde{V}_i(s^k,\policy') \notag\\
    &\quad\quad \quad +\tau\sum_{j\in \playerSet}\nu_j(s^k,\policy_j')-\tau\sum_{j\in \playerSet}\nu_j(s^k,\policy_j)  \bigg) \bigg].  \label{eqn:tildeV_minus_tildeV}
\end{align}
We can continue the above calculations by applying smoothed $Q$-function and noting that \(\pi_j'=\pi_j\) for all \(j\neq i\) and { \(\tilde{V}_i(s',\pi') = \pi_i'(s')^\top \tilde{\bQ}^{\pi'}_i(s')\),} 
\begin{align*}
    \eqref{eqn:tildeV_minus_tildeV}
    &= \avg_{\pi_i}\bigg[ \sum_{k=0}^{\infty}\discount^k \bigg( { \tilde{Q}_i^{\policy'}(s^k,a^k_i)}- \tilde{V}_i(s^k,\policy')+\tau\sum_{j\in \playerSet}\nu_j(s^k,\policy_j')-\tau \sum_{j\in \playerSet}\nu_j(s^k,\policy_j)  \bigg) \bigg]\\
    &{=} \frac{1}{1-\discount}\sum_{s'\in S}d^{\pi}_{\mu}(s') {  \Big(\pi_i(s')-\pi_i'(s')\Big)^\top\tilde{\bQ}^{\policy'}_i(s')} +\tau\ti(s',\policy_i')-\tau\ti(s',\policy_i)\Big).
\end{align*}

\subsection{Proof of Lemma \ref{lem: PerturbGameInitialGameDiff}}
From the definition of smoothed infinite horizon utility \eqref{eqn: smoothed value function}, we note that 
 for every \(i\in \playerSet, \pi_i\in\Pi_i,\pi_{-i}\in\Pi_{-i},s\in S\),
 \begin{equation}\label{eqn: V tilde and V}
 \begin{aligned}
     &\tilde{V}_i(s,\pi_i,\pi_{-i})  = V_i(s,\pi_i,\pi_{-i}) -\tau\avg_{\pi}\ls{\sum_{k=0}^{\infty}\gamma^k\sum_{j\in \playerSet}\nu_j(s^k,\pi_j) | s_0 = s}.
 \end{aligned}
 \end{equation}
Using \eqref{eqn: V tilde and V}, it holds that for any $\mu \in \mathcal{P}(S) $ and $\pi \in \Pi,$
\begin{align}
  &|\tilde{\vFunc}_i(\mu,\pi) -V_i(\mu,\pi)| = \tau \left| \avg_{\mu,\pi}\ls{\sum_{k=0}^{\infty}\discount^k\sum_{j\in \playerSet}\nu_j(s^k,{\policy}_j)} \right| \notag \\
  &\leq \frac{\tau N  \max_{s,\pi_i}\nu_i(s,\pi_i)}{1-\gamma}= \frac{\tau  N  \log(\bar{A})}{1-\gamma}.\label{eqn: bound of regu}
\end{align}
{The desired result follows from triangle inequality and  \eqref{eqn: bound of regu}.}

\subsection{Proof of Lemma \ref{lemma:technical}}
The proof of Lemma \ref{lemma:technical}
requires the following technical lemmas.
\begin{lemma}\label{lem: perturbedGameNMPG}
    If \(\game\) is a Markov \(\mnpg\)-potential game with \(\Phi\) as its \(\mnpg\)-potential function, 
    then for any \(s\in S, i\in \playerSet, \pi_i',\pi_i\in \Pi_i, \pi_{-i}\in \Pi_{-i}\),
    \(
\Big|(\tilde{\Psi}(s, \pi_i^{\prime}, \pi_{-i})-\tilde{\Psi}\left(s, \pi_i, \pi_{-i}\right)) - (\tilde{V}_i(s, \pi_i^{\prime}, \pi_{-i})-\tilde{V}_i(s, \pi_i, \pi_{-i}))\Big|\leq \mnpg,
\)
    where
    \[
        \tilde{\Psi}(s,\pi) := \Phi(s,\pi) - \tau \avg_{\pi}[\sum_{j\in \playerSet}\sum_{k=0}^{\infty}\gamma^k \nu_j(s^k,\pi_j) \mid s^0 = s].
    \]
\end{lemma}
\proof{To ease the notation, for function $f:S\times \Pi \ra \R$, we write $f(s,\cdot)$ as $f^s(\cdot)$. 
By \eqref{eqn: V tilde and V} and the definition of $\tilde\Psi$ in Lemma \ref{lemma:technical}, we have
 for all \(s\in S, i\in \playerSet, \pi_i',\pi_i\in\Pi_i, \pi_{-i}\in \Pi_{-i}\),
\begin{align*}
   &|\tilde{\Psi}^s( \pi_i^{\prime}, \pi_{-i})-\tilde{\Psi}^s( \pi_i, \pi_{-i}) - (\tilde{V}_i^s( \pi_i^{\prime}, \pi_{-i})-\tilde{V}_i^s(\pi_i, \pi_{-i}))|\\ =& |{\Phi}^s( \pi_i^{\prime}, \pi_{-i})-{\Phi^s}( \pi_i, \pi_{-i})- (V_i^s(\pi_i^{\prime}, \pi_{-i})-V_i^s( \pi_i, \pi_{-i}))|,
\end{align*}
    which is bounded by $\mnpg$ using Definition \ref{def: mnpg}.
\endproof}

\begin{lemma}\label{lem: FirstOrderOptimality}
    For any \(i\in \playerSet, s\in S, \pi_i'\in \Pi_i, t\in [T]\), it hold that 
    \begin{align*}
    &\sum_{a_i\in A_i}{  \tilde{Q}_i^{(t)}(s,a_i)}\left(\text{BR}_i^{(t)}(a_i|s)-\pi_i'(a_i|s)   \right) \\
    &\geq \tau\sum_{a_i\in A_i}  \log\left(\text{BR}_i^{(t)}(a_i|s)\right)\left( \text{BR}_i^{(t)}(a_i|s)  -\pi_i'(a_i|s)  \right).
\end{align*}
\end{lemma}
\proof{Fix arbitrary \(i\in \playerSet, s\in S\), and \(t\in [T]\).
Next, note that the optimization problem in \eqref{eq: SmoothedBR} is a strongly concave optimization problem. By the first order conditions of constrained optimality,  for all $\pi_i'\in \Pi_i$,
    \[
    \lr{\tilde{\bQ}_i^{(t)}(s)-\tau \nabla_{\pi_i(s)} \nu_i(s,\text{BR}_i^{(t)}(s))}^\top (\text{BR}_i^{(t)}(s)-\pi_i'(s)  ) \geq 0. 
\]
Note that \(\nabla_{\pi_i(a_i|s)} \nu_i(s,\pi_i) = 1+\log(\pi_i(a_i|s)) \) for every \(a_i\in A_i\). Therefore, for every \(\pi_i'\in \Pi_i\), 
 \begin{align*}
    &\sum_{a_i\in A_i} \tilde{Q}^{(t)}_i(s,a_i)\left(\text{BR}_i^{(t)}(a_i|s)-\pi_i'(a_i|s)   \right)\\
    &\geq \tau\sum_{a_i\in A_i}  \lr{1+\log\left(\text{BR}_i^{(t)}(a_i|s) \right)}\left( \text{BR}_i^{(t)}(a_i|s)  -\pi_i'(a_i|s)\right).
\end{align*}
The result follows by noting that
{\(\sum_{a_i\in A_i}\text{BR}_i^{(t)}(a_i|s)=\sum_{a_i\in A_i}\pi_i'(a_i|s)=1\)}. 
\endproof}
\begin{lemma}\label{lem: strongConvexityEntropy}
    For any \(i\in \playerSet, s\in S, \pi_i,\pi_i'\in \Pi_i\),
    \begin{align*}
       & \nu_i(s,\pi_i) - \nu_i(s,\pi_i') \geq \frac{1}{2}\|\pi_i(s)-\pi_i'(s)\|^2  +\sum_{a_i\in A_i}\lr{\log(\pi_i'(a_i|s))}\lr{\pi_i(a_i|s)-\pi_i'(a_i|s)}.
    \end{align*}
\end{lemma}
\proof{
Fix arbitrary \(i\in \playerSet, s\in S\).     To prove the lemma, we first claim that the mapping \(\mathcal{P}(A_i)\ni\pi\mapsto \nu_i(s,\pi)\) is \(1\)-strongly convex. This can be observed by computing the Hessian, which is a \(\R^{A_i\times A_i}\) diagonal matrix with \((a_i,a_i)\) entry as \(1/\pi(a_i|s)\). Since \(\pi(a_i|s)\leq 1\), it follows that the diagonal entries of the Hessian matrix are all greater than $1$. Thus, \(\nu_i(s,\cdot)\) is \(1\)-strongly convex function. The result follows by noting that for any \(\kappa\)-strongly convex function \(f\), 
    \(
        f(y) \geq f(x) + \nabla f(x)^\top (y-x) + \frac{\kappa}{2}\|y-x\|^2. 
    \)
\endproof}

\begin{lemma}\label{lem: LogPolicy}
    For any \(i\in \playerSet, t\in [T], a\in A_{\imax}\), there exists \(0\leq t^\ast\leq t\) such that 
    \[
         \tau |\log(\pi_{\imax}^{(t)}(a|\smax))| \leq 2\|{ \tilde{\bQ}_{\imax}^{(t^\ast)}(\smax)}\|_\infty + \tau \log(|A_{\imax}|).
    \]
\end{lemma}
\proof{Recall that in Algorithm \ref{alg: SequentialPureBRMaxPlayer},
at any time step \(t\in [T]\), player \(\bar{i}^{(t)}\) updates her policy at time \(t+1\) in the state \(\bar{s}^{(t)}\), while policies for other players and other states remain unchanged. 
Fix arbitrary \(t\in [T]\). Let \(0\leq t^\ast \leq t\) be the latest time step when player \(\imax\) updated its policy in state \(\smax\) before time \(t\). Note that \(t^\ast = 0\) if \(t\) is the first time when player \(\imax\) is updating its policy in state \(\smax\). Naturally, \(\bar{i}^{(t)} = \bar{i}^{(t^\ast)}\) and \(\bar{s}^{(t)} = \bar{s}^{(t^\ast)}\). Consequently, for every \(a\in A_{\bar{i}^{(t)}}\),
\begin{align*}
    \pi_{\bar{i}^{(t)}}^{(t)}(a|\bar{s}^{(t)}) = \text{BR}_{\bar{i}^{(t)}}^{(t^\ast)}(a|\bar{s}^{(t)}) = \frac{\exp(\tilde{Q}_{\imax}^{(t^\ast)}(\smax,a))}{\sum_{a'\in A_{\imax}}\exp(\tilde{Q}_{\imax}^{(t^\ast)}(\smax,a'))}.
\end{align*}
Consequently, for every \(a\in A_{\imax}\),
\begin{align*}
    \pi_{\imax}^{(t)}(a|\smax) 
        &\geq \frac{\exp(\tilde{Q}_{\imax}^{(t^\ast)}(\smax,\underline{a})/\tau)}{|A_{\bar{i}^{(t)}}|\exp(\tilde{Q}_{\imax}^{(t^\ast)}(\smax,\bar{a})/\tau)} \\
        &=\frac{1}{|A_{\imax}|}\exp\left(\left(\tilde{Q}_{\imax}^{(t^\ast)}(\smax,\underline{a})-\tilde{Q}_{\imax}^{(t^\ast)}(\smax,\bar{a})\right)/\tau\right),  
\end{align*}
with \(\bar{a} \in 
\underset{{a\in A_{\imax}}}{\arg\max} \tilde{Q}_{\imax}^{(t^\ast)}(\smax,a)\) and \(\underline{a} \in \underset{{a\in A_{\imax}}}{\arg\min}  \tilde{Q}_{\imax}^{(t^\ast)}(\smax,a)\).
Since \(\pi_{\imax}^{(t)}(a|\bar{s}^{(t)})\leq 1\), it follows that for every \(a\in A_{\imax}\), 
\begin{align*}
    |\log(\pi_{\imax}^{(t)}(a|\smax) )| 
    &\leq \log(|A_{\imax}|) + \frac{1}{\tau}\left(\tilde{Q}_{\imax}^{(t^\ast)}(\smax,\bar{a})-\tilde{Q}_{\imax}^{(t^\ast)}(\smax,\underline{a})\right)\\
    & \leq \log(|A_{\imax}|) + \frac{2}{\tau}\|\tilde \bQ_{\imax}^{(t^\ast)}(\smax)\|_{\infty}.  \quad\hfill 
\end{align*}
}
\begin{lemma}\label{lem: Q_tildeBound}
    For any \(t\in [T], i\in \playerSet, s\in S\), it holds that
       $ \|\tilde{\bQ}_i^{(t)}(s)\|_{\infty} \leq C\frac{1+\tau N \log(\bar{A})}{1-\gamma},$
    where \(C \coloneqq \max_{i\in \playerSet} \|u_i\|_{\infty}\).
\end{lemma}
\proof{First, we note that for any $s\in S,$ $\pi \in \Pi$,
\begin{align*}
|\tilde{V}_i(s,\pi)| 
   &\leq \avg_\pi\ls{\sum_{k=0}^{\infty}\gamma^k|{u}_i(s^k,a^k) - \tau \sum_{j\in I_N}\nu_j(s^k,\pi_j)|}  \\ 
   &\leq \avg_\pi\ls{\sum_{k=0}^{\infty}\gamma^k\lr{|{u}_i(s^k,a^k)| + \tau  N \log(\bar{A}) }} \leq C\frac{1+\tau N \log(\bar{A})}{(1-\gamma)}.
\end{align*}
By \eqref{eqn: smooth Q}, we note that for every \(i\in \playerSet, s\in S, a_i\in A_i\),
    \begin{align*}
    |\tilde{Q}^{(t)}_i(s,a_i)| &\leq \underset{a_{-i}\sim\pi_{-i}}{\mathbb{E}}
        \Big[|{{u}_i(s,a_i,a_{-i})} - \tau\sum_{j\in I_N}\nu_j(s,\pi_j)|  +\discount\sum_{s'\in S}P(s'|s,a_i, a_{-i})\big|\tilde{V}_i(s',\pi)\big| \Big]
        \\ 
        &\leq C \underset{a_{-i}\sim\pi_{-i}}{\mathbb{E}} \Bigg[ (1+\tau N \log(\bar{A}))\lr{1 +\frac{\discount}{1-\gamma}  } \Bigg].
    \end{align*}
}

\paragraph{Proof of Lemma \ref{lemma:technical}.}
(1)     {
Fix $t\in [T]$. 
To ease the notation, let $\pi_*' \coloneqq\pi_{\imax}^{(t+1)}$, $\pi_* \coloneqq\pi_{\imax}^{(t)}$, $\pi_{-*} \coloneqq\pi_{-\imax}^{(t)}$, $\nu_* \coloneqq \nu_{\imax}$, $Q_*$ denote $\tilde{Q}_{\imax}^{(t)}$, $\bQ_*$ denote $\tilde{\bQ}_{\imax}^{(t)}$.
Note that by \eqref{eq:max_improv} and \eqref{eq: PolicyMaxMaxImprove},
\begin{align}
&\Delta_{\imax}^{(t)}(\smax)=\sum_{a\in A_{\imax}}\left(  \pi_*'(a|\smax) - \pi_*(a|\smax) \right) Q_*(\smax,a) + \tau \nu_{*}(\smax,\pi_*) - \tau\nu_{*}(\smax,\pi_*')\notag\\
&{\leq}  \sum_{a\in A_{\imax}}\left( \pi_*'(a|\smax) - \pi_*(a|\smax) \right) Q_*(\smax,a)+ \tau \sum_{a\in A_{\imax}} \log(\pi_*(a|\smax))\lr{\pi_*(a|\smax)-\pi_*'(a|\smax)}\notag\\
&\leq  \sum_{a\in A_{\imax}}\bigg(\Big|  \pi_*'(a|\smax) - \pi_*'(a|\smax) \Big| \cdot \Big| Q_*(\smax,a) - \tau  \log( \pi_*(a|\smax))\Big|\bigg), \label{eqn: delta part 1}
\end{align} 
where the first inequality follows from convexity of \(\nu_i(s,\cdot)\). By Cauchy-Schwarz inequality and noting that \(\underset{i\in\playerSet}{\max}|A_i|\leq \bar{A}\),
\begin{align*}
\eqref{eqn: delta part 1}&{\leq} \sqrt{\bar{A}}\max_{a\in A_{\imax}}\Big| Q_*(\smax,a) - \tau  \log(\pi_*(a|\smax))\Big| \cdot\left\|\pi_*'(\smax) - \pi_*(\smax) \right\|_2 \\ 
&\leq \sqrt{\bar{A}}\lr{\max_{a\in A_{\imax}}\Big|Q_*(\smax,a)\Big| + \max_{a\in A_{\imax}} \tau \Big| \log(\pi_*(a|\smax))\Big|}\cdot \left\|\pi_*'(\smax) - \pi_*(\smax) \right\|_2.
\end{align*}
Note that 
Lemma \ref{lem: LogPolicy} implies that there exists \(\hat{t}\leq t\) such that 
$\underset{a\in A_{\imax}}{\max} \tau \bigg| \log(\pi_*(a|\smax))\bigg| \leq  2\| \tilde{\bQ}_{\imax}^{(\hat{t})}(\smax)\|_{\infty}+ \tau \log(\bar{A}).$ 
Consequently, it follows that 
\begin{align*}
\Delta_{\imax}^{(t)}(\smax)&\leq  \sqrt{\bar{A}}\Big(\|\bQ_*(\smax)\|_{\infty} + 2\| \tilde{\bQ}_{\imax}^{(\hat{t})}(\smax)\|_{\infty}+ \tau \log(\bar{A})\Big)\cdot\left\|\pi_*'(\smax) - \pi_*(\smax) \right\|_2 \\ 
&{\leq} 4C\frac{1+\tau N \log(\bar{A})}{1-\gamma}\sqrt{\bar{A}}\left\|\pi_*'(\smax) - \pi_*(\smax) \right\|_2, 
\end{align*}
 where the last inequality follows from Lemma \ref{lem: Q_tildeBound}.  This concludes the proof for Lemma \ref{lemma:technical}  1).}\\
(2) Here, we show that 
\begin{align*}
    &\sum_{t=1}^{T-1}\| \pi_*'(\smax)- \pi_*(\smax)\|_2^2 \leq \frac{2}{\tau\bar{\mu}}\lr{\tilde{\Psi}(\mu,\pi^{(T)}) - \tilde{\Psi}(\mu,\pi^{(0)}) + \mnpg T }.
\end{align*}
To see this, note that for any \(t\in [T]\),
        \begin{align}
&\tilde{\Psi}(\mu,\pi^{(t+1)}) - \tilde{\Psi}(\mu,\pi^{(t)})=\tilde{\Psi}(\mu, \pi_*',\pi_{-*}) - \tilde{\Psi}(\mu, \pi_*,\pi_{-*})  \notag \\ 
&\stackrel{(i)}{\geq} \tilde{V}_{\imax}(\mu, \pi_*',\pi_{-*}) -  \tilde{V}_{\imax}(\mu, \pi_*,\pi_{-*}) - \mnpg \notag
\\
    &\stackrel{(ii)}{=} \frac{1}{1-\gamma} \sum_{\substack{s\in S}}d_{\mu}^{ \pi_*',\pi_{-*}}(s)\bigg(\big( \pi_*'(s) - \pi_*(s)\big)^\top {\bQ}_*(s) + \tau \nu_{*}(s, \pi_*) - \tau \nu_{*}(s, \pi_*')\bigg) - \mnpg  \notag\\
     &\stackrel{(iii)}{=} \frac{1}{1-\gamma} d_{\mu}^{ \pi_*',\pi_{-*}}(\smax)\bigg(( \pi_*'(\smax) - \pi_*(\smax)^\top )\cdot {\bQ}_*(\smax) + \tau \nu_{*}(\smax, \pi_*) - \tau \nu_{*}(\smax, \pi_*')\bigg)  - \mnpg, 
\label{eqn: tilde psi minus psi part 1}
    \end{align}
where \((i)\) follows from Lemma \ref{lem: perturbedGameNMPG}, \((ii)\) follows from Lemma \ref{lem: performanceDiffPG}, and \((iii)\) holds because \(\pi_*'(s) = \pi_*(s)\) for all \(s\neq \smax\). Next, from Algorithm \ref{alg: SequentialPureBRMaxPlayer}, note that \(\pi_*'(\smax) = \text{BR}_{\imax}^{(t)}(\smax)\). Consequently, using Lemma \ref{lem: FirstOrderOptimality}, we obtain 
\begin{align}
  & \eqref{eqn: tilde psi minus psi part 1}  \geq \frac{\tau d_{\mu}^{\pi_*',\pi_{-*}}(\smax)}{1-\gamma}   \Big(\log(\pi_*'(\smax))^\top \cdot\big(\pi_*'(\smax) -\pi_*(\smax)\big)\notag \\
   &\quad +   \nu_{*}(\smax,\pi_{*}) - \nu_{*}(\smax,\pi_{*}')\Big) -\alpha. \label{eqn: tilde psi minus psi part 2}
\end{align}
    Furthermore, using Lemma \ref{lem: strongConvexityEntropy}, we obtain 
\begin{align}
    \eqref{eqn: tilde psi minus psi part 2}&
    {\geq} \frac{\tau }{2(1-\gamma)} d_{\mu}^{\pi_*',\pi_{-*}'(\smax)}\|\pi_*'(\smax)-\pi_*(\smax)\|_2^2 - \mnpg  \notag\\&\stackrel{(a)}{\geq}  \frac{\tau  \bar{\mu}}{2}\|\pi_*'(\smax)-\pi_*(\smax)\|_2^2 - \mnpg,  \notag
\end{align}
 where \((a)\) follows from \(d_{\mu}^{\pi_i^{(t+1)},\pi_{-i}^{(t)}}(\smax) \geq (1-\gamma)\bar{\mu}\). Summing the above inequality over all \(t\in [T]\) yields:
\begin{align*}
    &\tilde{\Psi}(\mu,\pi^{(T)}) - \tilde{\Psi}(\mu,\pi^{(0)}) = \sum_{t\in [T]}\tilde{\Psi}(\mu,\pi^{(t+1)}) - \tilde{\Psi}(\mu,\pi^{(t)}) \\&\geq \frac{\tau  \bar{\mu}}{2}\sum_{t\in[T]}\|\pi_{*}'(\smax)-\pi_{*}(\smax)\|_2^2 - \mnpg T. 
\end{align*}
Finally to conclude Lemma \ref{lemma:technical} (2), note that
\begin{align*}
\sum_{t\in[T]}\|\pi_{*}'(\smax)-\pi_{*}(\smax)\|_2^2 
&\leq \frac{2}{\tau  \bar{\mu}}\left(\tilde{\Psi}(\mu,\pi^{(T)}) - \tilde{\Psi}(\mu,\pi^{(0)})  + \mnpg T \right) \\ 
&\leq \frac{2}{\tau  \bar{\mu}}\left(|\Phi(\mu,\pi^{(T)}) - \Phi(\mu,\pi^{(0)})|  + \frac{2\tau N \log(\bar{A})}{1-\gamma}+ \mnpg T \right),
\end{align*}
where the last inequality follows by noting that for any \(\pi, \pi'\in \Pi\) and any $\mu \in \mathcal{P}(S)$,
\begin{align*}
   |\tilde{\Psi}(\mu, \pi) - \tilde{\Psi}(\mu, \pi')| &\leq \left|{\Phi}(\mu,\pi) - {\Phi}(\mu,\pi')\right|
  + \tau \Bigg|\avg_{\pi}\bigg[\sum_{\substack{j\in \playerSet\\
   t\in \mathbb{N}}}\gamma^t \nu_j(s^t,\pi_j)\bigg]\Bigg| 
   + \tau \Bigg| \avg_{\pi'}\bigg[\sum_{\substack{j\in \playerSet\\t\in\mathbb{N}}  }\gamma^t \nu_j(s^t,\pi_j)\bigg] \Bigg| \\ 
   &\leq |{\Phi}(\mu,\pi) - {\Phi}(\mu,\pi')| + 2\tau \frac{ N \log(\bar{A})}{1-\gamma}.
\end{align*}

\section{Algorithms to solve semi-infinite linear programming}
\label{sec: FindNearPotential}

{In this section, we present an algorithm based on the stochastic gradient method from \cite{tadic2003randomized} to solve the semi-infinite linear programming problem \eqref{eq: LinearSemiInfiniteFormulation}. 
Denote $C \coloneqq N \underset{i\in \playerSet}{\max}\|u_i\|_\infty$ and
define
\begin{equation}\label{eqn: gi}
\begin{aligned}
    g(\phi, y; \pi, \pi') =\max\Bigg\{&  \max_{i \in I} \Big| \sum_{s',a'}(d^{s}(s',a';\pi_i,\pi_{-i}) 
    - d^{s}(s',a';\pi_i',\pi_{-i}))(\phi-u_i)(s',a')\Big| - y,\\
    &\quad \max_{s\in S, a\in A} |\phi(s,a)| - C
    \Bigg\},
\end{aligned}
\end{equation}
which ensures that constraint (C1) in \eqref{eq: LinearSemiInfiniteFormulation} can be rewritten as
$
 g(\phi, y ; \pi, \pi')  \leq 0, \forall \pi, \pi' \in \Pi.
$
Let $h: \mathbb{R}\rightarrow \mathbb{R}$ be a convex differentiable function such that 
$$h(x) =0  \text{ for all } x \leq 0, \text{ and } h(x) > 0  \text{ for all } x > 0.$$ A candidate choice of $h$ is $h(x) = (\max\{0, x\})^2.$ Finally, we consider step-size schedules $\{\eta_t\}_{t=1}^\infty$ and $\{\beta_t\}_{t=1}^\infty$ such that
\begin{equation}\label{eqn: step-size assumption}
\begin{aligned}
     &\lim_{t\rightarrow\infty} \beta_t = \infty, \sum_{t=1}^\infty {\eta_t^2 \beta_t^2} < \infty, \sum_{t=1}^\infty \eta_{t} = \infty, 
     \text{ and }  \eta_t > 0, \beta_t < \beta_{t+1}  \text{ for all } t \geq 0.
\end{aligned}
\end{equation}
Theorem 4 in \cite{tadic2003randomized} shows that with probability 1, $(y^{(t)}, \phi^{(t)})$ almost surely converges to a solution of \eqref{eq: LinearSemiInfiniteFormulation}. 

\begin{algorithm}\caption{Algorithm to solve \eqref{eq: LinearSemiInfiniteFormulation}  \cite{tadic2003randomized}}
    \begin{algorithmic}
        \STATE{\textbf{Input:} $y^{(0)} \in \mathbb{R}_+, \phi^{(0)} \in \mathbb{R}^{S \times A} $, $\{\eta_t \}_{t=1}^\infty$ and $\{ \beta_t\}_{t=1}^\infty$ satisfying \eqref{eqn: step-size assumption}.
}
    \FOR{\(t = 0,1,2,...,T-1\)}
    \STATE{Sample $\pi$, $\pi'$ in $\Pi$ from uniform distribution and calculate $g(\phi^{(t)}, y^{(t)}; \pi, \pi')$ in \eqref{eqn: gi}.}
     \STATE{Update $\phi^{(t)}$ with}
        \begin{equation}\label{eqn: phi_update}
        \begin{aligned}
            \phi^{(t+1)} = \phi^{(t)} - &\eta_{t+1} \beta_{t+1} h' \left(g \left(\phi^{(t)}, y^{(t)}; \pi, \pi'\right) \right)\cdot \nabla_\phi g\left(\phi^{(t)},  y^{(t)}; \pi, \pi'\right),
        \end{aligned}
        \end{equation}
   and update $y^{(t)}$ with
   \begin{align*}
       y^{(t+1)} = y^{(t)} - \eta_{t+1}\Big(& 1 +  \beta_{t+1} h' \left(g \left(\phi^{(t)},  y^{(t)}; \pi, \pi'\right) \right) \cdot \nabla_y g\left(\phi^{(t)},  y^{(t)}; \pi, \pi'\right) \Big). 
   \end{align*}
\ENDFOR
\end{algorithmic}\label{alg: FindPotential}
\end{algorithm}

\paragraph{State-wise potential games.}
Algorithm \ref{alg: FindPotential} iteratively updates the variables $y \in \mathbb{R}$ and $\phi \in \mathbb{R}^{
S\times A}$. However, this method may be slow as the dimension of \(\phi\) scales with $|S|\cdot |A|$. 
For MCGs, where each state is a static potential game, one can utilize the game structure to accelerate the convergence of algorithm.


For an MCG $\game_{\textsf{mcg}}$, there exists a function \(\phi^\ast:S\times A \rightarrow \R\) such that for every \(i\in \playerSet, s\in S, a_i, a_i' \in A_i, a_{-i}\in A_{-i},\)
$
     |\phi^\ast(s,a_i,a_{-i}) - \phi^\ast(s,a_i',a_{-i}) - (u_i(s,a_i,a_{-i}) - u_i(s,a_i',a_{-i}))|=0.$
Then one can input $\phi^{(0)} = \phi^{*}$ and omit the update of $\phi^{(t)}$ in \eqref{eqn: phi_update} in Algorithm \eqref{alg: FindPotential}.  
Figure \ref{fig:FindNearPotential} shows the empirical performance of Algorithm \ref{alg: FindPotential} for the Markov congestion game. Note that with the setting in Section \ref{sec: numerics},
$y^{(t)}$ converges to $0$,  which suggests  that $\mathcal{G}_{\textsf{mcg}}$ may be an MPG, at least for some model parameters.
}
     \begin{figure}
    \centering
    \includegraphics[width=0.4\textwidth]{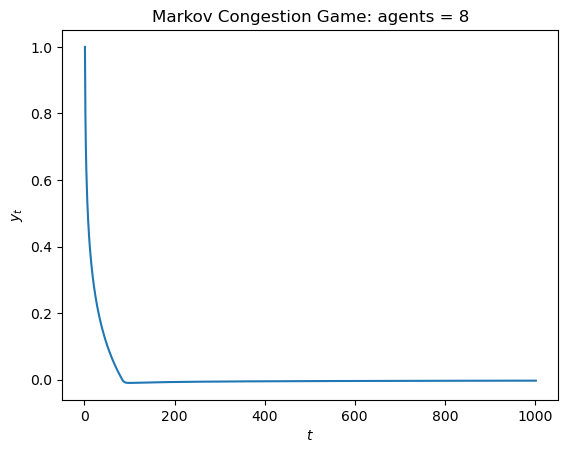}
    \caption{Find the game elasticity parameter $\mnpg$ for MCG using Algorithm \ref{alg: FindPotential}. ($\phi^{(0)} = \phi^*$, $\eta_t = \frac{1}{t}, \beta_t = t^{0.4999}, \, \forall t \geq 1.$) }
    \label{fig:FindNearPotential}
\end{figure}

\bibliographystyle{abbrv}
  \bibliography{references} 

\end{document}


\maketitle
\section{Proof of results in Section \ref{sec: NearMPG}}
\input{Appendix/ProofsSec3} 

\section{Proof of results in Section \ref{sec: Algorithms}}
\input{Appendix/ProofsSec4}

\section{Additional Experiments}
\begin{figure}
  \centering
  \begin{subfigure}{0.3\textwidth}
    \centering
    \includegraphics[width=\textwidth]{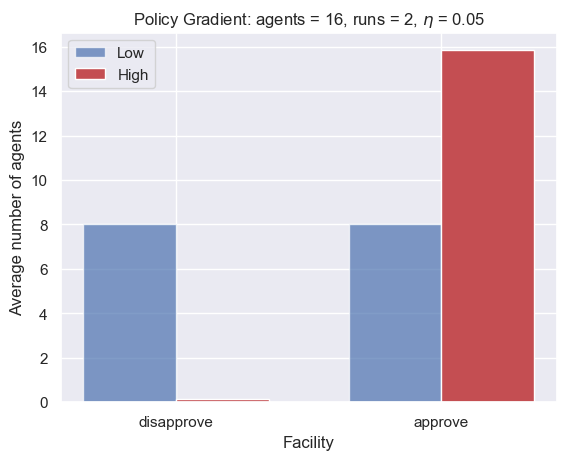}
    \caption{}
    \label{fig:a_50}
  \end{subfigure}
  \begin{subfigure}{0.3\textwidth}
    \centering
    \includegraphics[width=\textwidth]{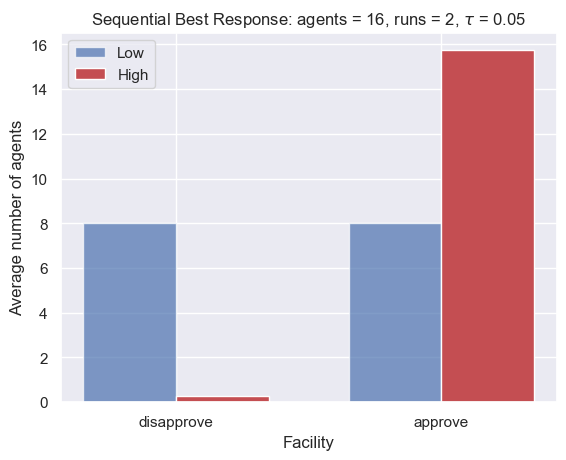}
    \caption{}
    \label{fig:b_50}
  \end{subfigure}
  \begin{subfigure}{0.33\textwidth}
    \centering \includegraphics[width=\textwidth]{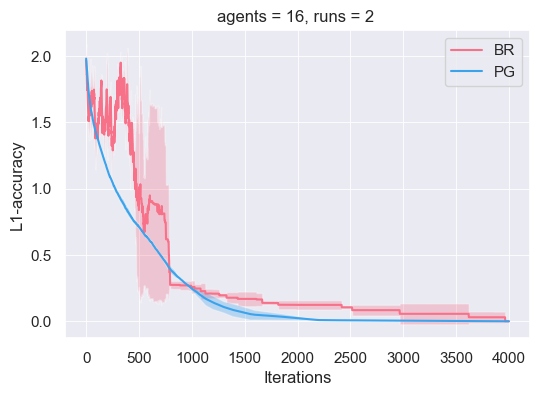}
    \caption{}
    \label{fig:c_50}
  \end{subfigure}
  \caption{\textbf{Perturbed Markov team game ($\kappa = 50$):} (a) and (b) are distributions of players taking actions in all states: (a) using policy gradient with step-size $\eta = 0.05$; (b) using sequential best response with regularizer $\tau_t =  0.9975^t \cdot  0.05 $. (c) is mean L1-accuracy with shaded region of one standard deviation over all runs. }
  \label{fig:kappa_50}
\end{figure}

In this section, we expand on the numerical experiments discussed in Section \ref{sec: numerics} and investigate the impact of random transitions on the performance of Algorithm \ref{alg: PolicyGradient} and \ref{alg: SequentialPureBRMaxPlayer}. To this end, we simulate a PMTG with the same parameters as before, i.e., $|I|=16$ agents, and $|A_i|=2$ possible actions for each agent $i\in I$. However, we introduce a stochastic transition rule based on a logistic function $
   {P} \left({\text{High}} | {\text{Low}}\right) =\frac{1}{1 + \exp \left({-\kappa \left(n(a) - \frac{|I|}{2}\right)} \right)}, 
   {P} \left({\text{High}} | {\text{High}}\right) =\frac{1}{1 + \exp \left({-\kappa \left(n(a) - \frac{|I|}{4}\right)} \right)},$
where $n(a)$ denotes the number of agents who approve the project, and $\kappa$ modulates the steepness of the transition function as it passes through its midpoint.

We set $\kappa = 50$ as the parameter in the logistic transition function. We apply a regularizer of the form $\tau_t = 0.9975^t \cdot 0.05$ in Algorithm \ref{alg: SequentialPureBRMaxPlayer} and a fixed step size $\eta = 0.05$ in Algorithm \ref{alg: PolicyGradient}. 
As shown in Figures \ref{fig:a_50} and \ref{fig:b_50}, both algorithms successfully converged to deterministic Nash policies, despite the randomness in transitions. Figure \ref{fig:c_50} further illustrates that the rate of convergence for the two algorithms is similar in this particular problem setting.

\section{Additional Contents (After first submission)}
Proposition \ref{prop: sep of value func} demonstrates that the value function of any player $i$ in a Markov $\alpha$-potential game can be decomposed into three components: the $\alpha$-Potential function $\Phi$, a term independent of player $i$'s policy, and a term bounded by $\alpha$. The first two terms are identical to the value function decomposition in MPG \citep{leonardos2021global}. The third term is unique to the Markov $\alpha$-potential game framework.

\begin{proposition}[Separability of Value Functions]\label{prop: sep of value func}
    Let \(\game = \langle \playerSet, \stateSet, (\actSet_i)_{i\in\playerSet}, (\stagePayoff_i)_{i\in\playerSet}, \transition, \discount\rangle\) be a Markov $\alpha$-potential game with $\alpha$-potential $\Phi$. Then there exists a function $U^i:S \times \Pi_{-i}\rightarrow \mathbb{R}$ and a function $r^i: S \times \Pi$ satisfying $|r^i| \leq \alpha$ such that for any  $s\in S$ and $\pi=\left(\pi_i, \pi_{-i}\right) \in \Pi$, we have $V^i(s, \pi)=\Phi(s, \pi)+U^i\left(s, \pi_{-i}\right) + r^i(s, \pi)$ for any agent $i \in I$.
\end{proposition}
\begin{proof}
Fix an admissible policy $\bar{\pi}_i \in \Pi_i$ for player $i$. Define $U(s, \pi_{-i}) := V_i(s, \bar{\pi}_i, \pi_{-i}) - \Phi(s, \bar{\pi}_i, \pi_{-i})$ and define 
\begin{equation}\label{eqn: r_i}
   r^i(s, \pi_i, \pi_{-i}) := V_i(s, \pi_i, \pi_{-i}) - \Phi(s, \pi_i, \pi_{-i}) - U^i(s, \pi_{-i}) 
\end{equation}
for any $s\in S, \pi_i\in \Pi_i, \pi_{-i} \in \Pi_{-i}.$ Plug $\bar{\pi}_i$ into \eqref{eqn: r_i} to get $ r^i(s, \bar{ \pi}_i, \pi_{-i}) = 0.$
Finally, by Definition \ref{def: eps MPG} one can obtain
    \begin{align*}
         |r^i(s, \pi_i, \pi_{-i})|  &=  \left|  r^i(s, \pi_i, \pi_{-i}) + U^i(s, \pi_{-i}) - (r^i(s, \bar{\pi}_i, \pi_{-i}) + U^i(s, \pi_{-i})) \right| \\
         &= \left|  V_i(s, \pi_i, \pi_{-i}) - \Phi(s, \pi_i, \pi_{-i}) - ( V_i(s, \bar{\pi}_i, \pi_{-i}) - \Phi(s, \bar{\pi}_i, \pi_{-i})) \right|\\
         &\leq \alpha.
    \end{align*}
\end{proof}

\bibliographystyle{siamplain}
\bibliography{references}